\theoremstyle{definition}
\newtheorem{thm}{Theorem}
\newtheorem{prop}[thm]{Proposition}
\newtheorem{lem}[thm]{Lemma}
\newtheorem{cor}[thm]{Corollary}
\newtheorem{definition}[thm]{Definition}
\newcommand{\Z}{\mathbb{Z}}
\newcommand{\R}{\mathbb{R}}
\newcommand{\C}{\mathbb{C}}
\newcommand{\B}{\mathfrak{B}}
\newcommand{\Hilb}{\mathcal{H}}
\newcommand{\bound}{\mathcal{B}}
\newcommand{\id}{\mathrm{id}}
\newcommand{\zbound}{\bound_0}
\DeclareMathOperator{\ad}{ad}
\DeclareMathOperator{\Tr}{Tr}
\newcommand{\formalop}{\mathcal{A}}
\newcommand{\formalsa}{\formalop^{\mathrm{sa}}}
\newcommand{\transop}{\formalop^{\mathrm{inv}}}
\newcommand{\localop}{\formalop^{\mathrm{\standardl oc}}}
\newcommand{\Zfinsubset}{\mathcal{P}_\mathrm{fin}(\mathbb{Z})}
\DeclarePairedDelimiter{\zinterval}{\llbracket}{\rrbracket}
\mathchardef\standardl=\mathcode`l
\newcommand{\deactivatel}{\mathcode`l=\standardl}
\edef\operator@font{\operator@font\noexpand\deactivatel}
\newcommand{\change}{\textcolor{blue}}
\renewcommand{\change}{}
\begin{document}
\newcommand{\nameoftitle}{%
Integrability from a single conservation law in quantum spin chains%
}

\title{\nameoftitle}

\author{Akihiro Hokkyo}
\email{hokkyo@cat.phys.s.u-tokyo.ac.jp}
\affiliation{Department of Physics, Graduate School of Science, The University of Tokyo, 7-3-1 Hongo, Bunkyo, Tokyo, 113-8654, Japan}

\begin{abstract}
    We prove that for \change{translationally invariant} quantum spin chains with finite-range interactions, 
    the existence of a specific conservation law known as the Reshetikhin condition implies the presence of infinitely many local conserved quantities, i.e., integrability. 
    This shows that the entire hierarchy of conservation laws associated with solutions of the Yang--Baxter equation is already encoded in the lowest nontrivial conservation law. 
    Combined with recent rigorous results on nonintegrability, 
    our theorem strongly restricts the possibility of partially integrable systems 
    that admit only a finite but large number of local conserved quantities. 
    Our work establishes a rigorous foundation for the systematic identification of new integrable models 
    and deepens the algebraic understanding of conservation-law structures in quantum spin chains.
\end{abstract}

\maketitle
\section{Introduction}
Quantum integrable models have long provided a unique window into the non‐perturbative behavior of quantum many‐body systems, 
from the exact solution of one‐dimensional quantum magnets~\cite{betheZurTheorieMetalle1931} 
to the emergence of generalized hydrodynamics in far‐from‐equilibrium dynamics~\cite{bertiniTransportOutofEquilibriumXXZ2016,castro-alvaredoEmergentHydrodynamicsIntegrable2016}.
Although a universally accepted definition of quantum integrability is lacking~\cite{cauxRemarksNotionQuantum2011,gogolinEquilibrationThermalisationEmergence2016}, 
its essential feature is often regarded as the existence of an infinite hierarchy of local conserved quantities, 
which we adopt as the working definition in this paper. 
Such conserved quantities encode much information about the initial state and constrain the dynamics in a way that generically prevents thermalization~\cite{langenExperimentalObservationGeneralized2015}, 
enabling long-lived memory~\cite{kinoshitaQuantumNewtonsCradle2006} and anomalous transport phenomena~\cite{ronzheimerExpansionDynamicsInteracting2013}. 
In contrast, generic quantum many-body systems admit only a few trivial conserved quantities, such as total energy and particle number. 
Therefore, integrability is a highly non-generic and fine-tuned property, 
and identifying integrable systems has been one of the central problems in rigorous statistical mechanics.

Traditionally, one establishes integrability by solving the system explicitly; 
for example, solving the Yang\nobreakdash--Baxter equation and 
constructing a commuting family of local charges 
through the logarithmic derivative of the transfer matrix~\cite{korepinQuantumInverseScattering1993}.
As an alternative, Reshetikhin proposed a simpler criterion~\cite{kulishQuantumSpectralTransform1982}: 
the existence of a single third-order conservation law with a special form. 
The equivalent condition had earlier appeared in the work of Jimbo and Miwa~\cite{jimboRemarksDifferentialApproach1984}, 
where it was formulated as the tangential star–triangle relation.
Unlike approaches that rely on additional algebraic structures such as the Yang--Baxter equation or the self-duality~\cite{dolanConservedChargesSelfduality1982}, 
this criterion is algorithmically verifiable directly from the Hamiltonian.
Since this \emph{Reshetikhin condition}
is a necessary condition for \change{an $R$-matrix of difference form} 
to solve the Yang--Baxter equation, 
it has long served as a a \change{guiding principle} in the search for new solvable systems~\cite{jimboClassificationSolutionsStartriangle1985,
kennedySolutionsYangBaxterEquation1992,
batchelorIntegrableSU2invariant1994,
idzumiSolvableNineteenvertexModels1994,
mutterSolvableSpin1Models1995,
bibikovHowSolveYang2003,
deleeuwClassifyingIntegrableSpin12019,
deleeuwClassifyingNearestNeighborInteractions2020,
deleeuwYangBaxterBoostSplitting2021,
gomborIntegrableSpinChains2021,
corcoranIntegrableModelsRydberg2025,
lalDeepLearningBased2025}.
Remarkably, in all known cases, systems satisfying the Reshetikhin condition have indeed turned out to be integrable. 
For this reason, it has been widely regarded as a heuristic test of integrability, 
as proposed by Grabowski and Mathieu~\cite{grabowskiIntegrabilityTestSpin1995}. 

Despite its wide use, however, it has remained an open question 
whether the Reshetikhin condition alone suffices to ensure integrability, 
namely the existence of infinitely many commuting local conserved quantities. 
It has often been argued that higher-order conservation laws 
and their mutual commutativity require additional algebraic input 
not captured by the third-order condition alone~\cite{mutterSolvableSpin1Models1995}. 
In this paper, we present the first complete proof that any 
\change{translationally invariant}
one‑dimensional quantum system 
\change{with finite-range interactions}
satisfying the Reshetikhin condition necessarily gives rise to an infinite sequence of commuting local charges. 
Specifically, we show that the so-called boost operator~\cite{tetelmanLorentzGroupTwodimensional1982,sogoBoostOperatorIts1983,thackerCornerTransferMatrices1986} allows the iterative generation of higher‑order conservation laws.
We further discuss a rigorous foundation for the empirical dichotomy of integrability: 
either a system possesses infinitely many local conserved quantities, 
or a few simple \change{local} conserved charges.

The paper is organized as follows.
Section~\ref{sec:setup} introduces the precise setup and states our main theorem, which asserts that the Reshetikhin condition implies quantum integrability.
In Section~\ref{sec:math}, we provide a rigorous definition of operators on the infinite chain.
The technical core of the proof, namely the correspondence between infinite and finite systems, is established in Appendix~\ref{sec:correspondence}.
Section~\ref{sec:proof} presents a detailed proof of the main result.
Section~\ref{sec:dichotomy} discusses the empirical dichotomy of integrability in light of the theorem.
Finally, in Section~\ref{sec:generalization}, we establish extensions of our result to systems with unconventional boost operators and parameter-dependent Hamiltonians.

\section{Setup and main result}\label{sec:setup}
We consider a quantum spin system on an infinite open chain, with sites labeled by $\mathbb{Z}$. 
Each site $i\in\mathbb{Z}$ has a local Hilbert space $\Hilb_i\cong\mathbb{C}^d$, where $d<\infty$. 
We denote by $\bound_i$ the set of bounded linear operators on $\Hilb_i$, and by $\bound_{0i}$ the subset of traceless operators. 
We call an operator \emph{at most $n$-local} 
if it can be expressed as a sum of terms acting nontrivially on at most $n$ contiguous sites, 
and \emph{$n$-local} if it is not at most $(n-1)$-local.

The Hamiltonian consists of nearest-neighbor interactions and on-site potentials:
\begin{equation}
    \hat{H}=\sum_{i\in\mathbb{Z}} \hat{h}^{(2)}_{i}+\sum_{i\in\mathbb{Z}}\hat{h}^{(1)}_i,
    \label{eq:Hamiltonian}
\end{equation}
with $\hat{h}^{(2)}_{i}\in\bound_{0i}\otimes\bound_{0,i+1}$ and $\hat{h}^{(1)}_i\in\bound_{0i}$. 
We assume Hermiticity and translational invariance. 
\change{We note that Hamiltonians with finite-range interactions that are $3$-local or higher 
can also be incorporated into the present framework by grouping several neighboring sites into a single effective site.}
The mathematically rigorous definitions of the formal sum in Eq.~\eqref{eq:Hamiltonian} and of the commutators below are given in Section~\ref{sec:math}.

A key ingredient in our analysis is the boost operator~\cite{tetelmanLorentzGroupTwodimensional1982,sogoBoostOperatorIts1983,thackerCornerTransferMatrices1986}. 
Fix an arbitrary translationally invariant one-local operator $\hat{C}=\sum_{i\in\mathbb{Z}}\hat{c}_i$. 
The boost operator associated with the Hamiltonian $\hat{H}$ is defined by
\begin{align}
        \hat{B}_{\hat{H}}&\coloneqq\sum_{j\in\mathbb{Z}} j(\hat{h}^{(2)}_{j}+\hat{h}^{(1)}_j)+\hat{C}\nonumber\\
        &=\sum_{j\in\mathbb{Z}} j\hat{h}_{\hat{C},j},
\end{align}
where $\hat{h}_{\hat{C},j}\coloneqq\hat{h}^{(2)}_{j}+\hat{h}^{(1)}_{j}+\hat{c}^{(1)}_{j}-\hat{c}^{(1)}_{j+1}\in\bound_{i}\otimes\bound_{i+1}$.
We set $\hat{Q}^{(2)}\coloneqq\hat{H}$ and recursively $\hat{Q}^{(n+1)}\coloneqq [\hat{B}_{\hat{H}},\hat{Q}^{(n)}]$. 
By construction, $\hat{Q}^{(n)}$ is at most $n$-local. 
Although the full structure of $\hat{Q}^{(n)}$ is complicated, 
its $n$-local component admits the compact expression
\begin{equation}
    (-1)^n(n-2)!\sum_{i\in\mathbb{Z}}
    [\dots[[\hat{h}_i^{(2)},\hat{h}_{i+1}^{(2)}],\hat{h}_{i+2}^{(2)}],\dots,\hat{h}_{i+n-2}^{(2)}],
\end{equation}
which is nonzero whenever the injectivity condition (Eq.~\eqref{eq:assumption_injective}) holds~\cite{hokkyoRigorousTestQuantum2025}. 

For any translationally invariant operator $\hat{Q}$ commuting with $\hat{H}$, the commutator $[\hat{B}_{\hat{H}},\hat{Q}]$ is again translationally invariant~\cite{pozsgayCurrentOperatorsIntegrable2020,suraceWeakIntegrabilityBreaking2023}. 
Indeed, denoting by $\mathcal{T}:\bound_{i}\to\bound_{i-1}$ the translation map, one finds $\mathcal{T}(\hat{B}_{\hat{H}})=\hat{B}_{\hat{H}}+\hat{H}$, and hence
\begin{equation}
    \mathcal{T}[\hat{B}_{\hat{H}},\hat{Q}]=
    [\mathcal{T}(\hat{B}_{\hat{H}}),\hat{Q}]
    =[\hat{B}_{\hat{H}}+\hat{H},\hat{Q}]
    =[\hat{B}_{\hat{H}},\hat{Q}].\label{eq:trans_inv_from_boost}
\end{equation}
In particular, we have
\begin{align}
    \hat{Q}^{(3)}&=[\hat{B}_{\hat{H}},\hat{H}]\nonumber\\
    &=-\sum_{i\in\mathbb{Z}}[\hat{h}_{\hat{C},i},\hat{h}_{\hat{C},i+1}].
    \label{eq:Q3}
\end{align}

We now state our main result. 
\begin{thm}\label{thm:main}
    If $\hat{Q}^{(3)}$ is a conserved quantity of $\hat{H}$, i.e., $[\hat{H},\hat{Q}^{(3)}]=0$, 
    then for any $n\geq2$, $\hat{Q}^{(n)}$ is a translationally invariant conserved quantity of $\hat{H}$.  
    Moreover, the operators in $\{\hat{Q}^{(n)}\}_{n=2}^\infty$ mutually commute. 
\end{thm}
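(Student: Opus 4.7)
The theorem bundles three assertions about $\hat{Q}^{(n)}$ for $n\geq 2$: translational invariance (TI), conservation $[\hat{H},\hat{Q}^{(n)}]=0$, and mutual commutativity $[\hat{Q}^{(m)},\hat{Q}^{(n)}]=0$. I would attempt a single induction on $n$ driven by the Jacobi identity. The TI part is immediate: under the inductive assumption that $\hat{Q}^{(n)}$ is TI and conserved, the identity $\mathcal{T}(\hat{B}_{\hat{H}})=\hat{B}_{\hat{H}}+\hat{H}$ recalled in the excerpt yields $\mathcal{T}(\hat{Q}^{(n+1)})=[\hat{B}_{\hat{H}}+\hat{H},\hat{Q}^{(n)}]=\hat{Q}^{(n+1)}$, so TI propagates as soon as conservation is available at the previous level.

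For conservation, applying Jacobi to $[\hat{H},[\hat{B}_{\hat{H}},\hat{Q}^{(n)}]]$ gives
\[
[\hat{H},\hat{Q}^{(n+1)}]=-[\hat{Q}^{(3)},\hat{Q}^{(n)}]+[\hat{B}_{\hat{H}},[\hat{H},\hat{Q}^{(n)}]],
\]
so under the inductive hypothesis $[\hat{H},\hat{Q}^{(n)}]=0$ the conservation of $\hat{Q}^{(n+1)}$ reduces to $[\hat{Q}^{(3)},\hat{Q}^{(n)}]=0$. More generally, setting $c_{m,n}\coloneqq[\hat{Q}^{(m)},\hat{Q}^{(n)}]$ and using $[\hat{Q}^{(k)},\hat{B}_{\hat{H}}]=-\hat{Q}^{(k+1)}$, Jacobi produces the master recursion $c_{m+1,n}+c_{m,n+1}=[\hat{B}_{\hat{H}},c_{m,n}]$, which together with antisymmetry $c_{m,n}=-c_{n,m}$ propagates vanishing of $c$ along a diagonal $m+n=N$ into vanishing of all genuinely off-diagonal entries ($|m-n|\geq2$) on the next diagonal $N+1$.

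The main obstacle is that the master recursion and antisymmetry alone fail to determine the remaining ``near-diagonal'' entries $c_{n,n+1}$. Indeed, in the free Lie algebra generated by $\hat{B}_{\hat{H}}$ and $\hat{H}$ modulo only the Reshetikhin relation $[\hat{H},[\hat{B}_{\hat{H}},\hat{H}]]=0$, one checks directly that $c_{2,5}$, $c_{3,4}$, and their higher analogues remain free parameters paired antisymmetrically. Hence the proof cannot be purely Lie-algebraic; the concrete local structure of the chain must enter.

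To close the gap, I would exploit the explicit $n$-local leading-density form of $\hat{Q}^{(n)}$ given by the iterated commutator displayed in the excerpt. Reading $[\hat{H},\hat{Q}^{(3)}]=0$ at its top-locality (4-local) component yields a nontrivial algebraic identity on the density $\hat{h}^{(2)}$, closely related to the infinitesimal form of the Yang--Baxter equation. The idea is to bootstrap this single density-level identity, by repeated insertion inside the nested commutators that describe $\hat{Q}^{(n+1)}$, into a family of higher-locality identities forcing the top $(n+2)$-local component of $[\hat{H},\hat{Q}^{(n+1)}]$ to vanish; the subleading components would then be killed by the master recursion together with TI, giving both the missing conservation and (via the same recursion) the near-diagonal commutativity. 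Executing this bootstrap cleanly, while tracking the combinatorics of the nested commutators and the contributions of $\hat{h}^{(1)}$ and the auxiliary one-local operator $\hat{C}$, is the step I expect to be by far the most delicate.
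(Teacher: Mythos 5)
You correctly set up the Jacobi-identity machinery, correctly derive that conservation of $\hat{Q}^{(n+1)}$ reduces to $[\hat{Q}^{(3)},\hat{Q}^{(n)}]=0$, and — most importantly — correctly diagnose that the Lie-algebraic recursion plus antisymmetry fails to close on the near-diagonal commutators $[\hat{Q}^{(n)},\hat{Q}^{(n+1)}]$, so that additional structure beyond the bare Jacobi relations must enter. This diagnosis is exactly right, and it is a genuine gap in the argument of the preprint that the paper criticizes. However, your proposed fix (a combinatorial bootstrap of density-level identities by inserting the $4$-local Reshetikhin constraint into the nested commutators describing the top-locality components of $\hat{Q}^{(n+1)}$) is not the mechanism the paper uses, and you yourself flag it as speculative and unexecuted; as written this remains a genuine gap.

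The missing idea in your proposal is a short, non-combinatorial lemma that the paper calls Lemma~\ref{lem:double_commutator}: for a translationally invariant local operator $\hat{A}$, the double-commutator condition $[\hat{H},[\hat{H},\hat{A}]]=0$ already forces $[\hat{H},\hat{A}]=0$. The proof passes to a finite periodic chain (legitimate because commutativity of translationally invariant local operators is a locality-bounded statement) and then observes
\begin{equation*}
\|[\hat{H}_{[N]},\hat{A}_{[N]}]\|_{\mathrm{HS}}^2
= \Tr\bigl(\hat{A}_{[N]}^{\dagger}\,[\hat{H}_{[N]},[\hat{H}_{[N]},\hat{A}_{[N]}]]\bigr)=0,
\end{equation*}
using cyclicity of the trace, Hermiticity of $\hat{H}$, and positive definiteness of the Hilbert--Schmidt norm. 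Combined with the Jacobi computation you already have (namely $[\hat{H},[\hat{H},\hat{Q}^{(n+1)}]]=-[\hat{H},[\hat{Q}^{(3)},\hat{Q}^{(n)}]]=0$ once $\hat{Q}^{(3)}$ and $\hat{Q}^{(n)}$ are conserved), this closes the conservation induction immediately, and mutual commutativity then follows from the master recursion you wrote down. Note that this is precisely where the concrete properties of the spin chain enter — finite local Hilbert-space dimension, Hermiticity, and translational invariance — but in the form of a one-line positivity argument rather than a locality bootstrap; your instinct about \emph{which} hypotheses must be used is sound, but the route through them is much shorter than the one you sketch, and it is far from clear that your bootstrap could be made to work at all.
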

As noted above, for generic Hamiltonians the operator $\hat{Q}^{(n)}$ is $n$-local, 
so that each element of $\{\hat{Q}^{(n)}\}_{n=2}^\infty$ yields a distinct and nontrivial conserved quantity.
Therefore, Theorem~\ref{thm:main} demonstrates that a single conservation law, $\hat{Q}^{(3)}$, generates an infinite hierarchy of local conserved quantities, i.e., integrability. 
Since the condition $[\hat{H},\hat{Q}^{(3)}]=0$, or its local version, 
is precisely the Reshetikhin condition~\cite{kulishQuantumSpectralTransform1982}, 
Theorem~\ref{thm:main} can be summarized as \emph{the Reshetikhin condition implies integrability}. 
This is equivalent to the positive resolution of Conjecture~2 of Grabowski and Mathieu~\cite{grabowskiIntegrabilityTestSpin1995} for the standard boost operator $\hat{B}_{\hat{H}}$ (the case of general boosts will be addressed later).
Our result establishes, for the first time, that the third-order conservation law alone is sufficient to guarantee full quantum integrability.

Since commutativity of translationally invariant local operators is a local property
(see Theorem~\ref{thm:locality_of_com} and Corollary~\ref{cor:locality_of_conservation} in 
Appendix~\ref{sec:correspondence}),
Theorem~\ref{thm:main} extends to sufficiently large finite systems. 
Concretely, for a translationally invariant local operator $\hat{Q}=\sum_{i\in\mathbb{Z}}\hat{q}_i$ in the infinite system, 
we can define its finite-size counterpart 
$\hat{Q}_{[N]}=\sum_{i=1}^N\hat{q}_i$ on a chain of length $N$ with the periodic boundary condition. 
Then we have the following corollary. 
\begin{cor}
    If $\hat{Q}^{(3)}_{[N_0]}$ is a conserved quantity of $\hat{H}_{[N_0]}$, i.e., $[\hat{H}_{[N_0]},\hat{Q}^{(3)}_{[N_0]}]=0$ for some $N_0\geq6$, 
    then $[\hat{Q}^{(n)}_{[N]},\hat{Q}^{(m)}_{[N]}]=0$ 
    for all $N\geq 4$ and $(n,m)$ satisfying $2\leq n,m\leq \frac{N}{2}$. 
    In particular, $\hat{Q}^{(n)}_{[N]}$ is a conserved quantity of $\hat{H}_{[N]}$ for all $2\leq n\leq \frac{N}{2}$. 
\end{cor}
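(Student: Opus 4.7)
The plan is to pass through the infinite-chain statement (Theorem~\ref{thm:main}) by invoking the locality principle for translationally invariant commutators (Theorem~\ref{thm:locality_of_com} and Corollary~\ref{cor:locality_of_conservation} in the Supplemental Material) in both directions: first to lift the single finite-chain hypothesis to the infinite chain, and then to project the conclusion back onto every sufficiently large finite chain.

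For the lift, observe that $\hat{H}$ is at most $2$-local, $\hat{Q}^{(3)}$ is by construction at most $3$-local, and both are translationally invariant, so the commutator $[\hat{H},\hat{Q}^{(3)}]$ is translationally invariant and at most $4$-local. The locality principle states that, once $N_0$ exceeds a threshold determined by this locality (here producing the stated bound $N_0\geq 6$), vanishing on the finite periodic chain is equivalent to vanishing on $\mathbb{Z}$. The hypothesis $[\hat{H}_{[N_0]},\hat{Q}^{(3)}_{[N_0]}]=0$ therefore upgrades to $[\hat{H},\hat{Q}^{(3)}]=0$ on the infinite chain, whereupon Theorem~\ref{thm:main} yields that every $\hat{Q}^{(n)}$ ($n\geq 2$) is a translationally invariant conserved quantity of $\hat{H}$ and that the family $\{\hat{Q}^{(n)}\}_{n\geq 2}$ mutually commutes on $\mathbb{Z}$.

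For the projection back, note that $\hat{Q}^{(n)}$ and $\hat{Q}^{(m)}$ are at most $n$- and $m$-local, so their commutator is translationally invariant and at most $(n+m-1)$-local. The hypothesis $2\leq n,m\leq N/2$ gives $n+m-1\leq N-1$, so the commutator density fits strictly inside the chain; together with $N\geq 4$ this again invokes the locality principle to identify $[\hat{Q}^{(n)}_{[N]},\hat{Q}^{(m)}_{[N]}]$ with the finite-size counterpart of $[\hat{Q}^{(n)},\hat{Q}^{(m)}]$. Vanishing on $\mathbb{Z}$ from the previous step then forces $[\hat{Q}^{(n)}_{[N]},\hat{Q}^{(m)}_{[N]}]=0$, and specializing to $m=2$ (so $\hat{Q}^{(2)}_{[N]}=\hat{H}_{[N]}$) yields the conservation statement.

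The only substantive difficulty is bookkeeping: extracting the explicit thresholds $N_0\geq 6$ and $N\geq 4$ from the Supplemental Material's locality theorem, which reflects how wraparound contributions in the commutator density can contaminate the finite-volume equality on short chains. Once those quantitative bounds are in hand, the corollary follows mechanically from Theorem~\ref{thm:main}.
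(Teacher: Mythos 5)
Your proposal is correct and takes essentially the same route the paper intends: lift the finite-chain hypothesis to $[\hat{H},\hat{Q}^{(3)}]=0$ on $\mathbb{Z}$ via Corollary~\ref{cor:locality_of_conservation} (whose second, converse direction requires $r(\hat{H})+r(\hat{Q}^{(3)})\leq N_0/2+2$, yielding $N_0\geq 6$), invoke Theorem~\ref{thm:main}, and project back to each finite chain via Theorem~\ref{thm:locality_of_com} (whose hypothesis $r(\hat{Q}^{(n)}),r(\hat{Q}^{(m)})\leq (N+1)/2$ is what is actually used — slightly different from your $n+m-1\leq N-1$ heuristic — and is guaranteed by $n,m\leq N/2$, with $N\geq 4$ merely making that range nonempty).
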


\section{Mathematical Framework}\label{sec:math}
In this section, we introduce rigorous definitions of operators and their commutators on the infinite chain, thereby providing a mathematical justification for the formal manipulations used throughout this paper.
\subsection{Operators with finite support}

The set of all finite subsets of $\Z$ is denoted by
\begin{equation}
     \Zfinsubset\coloneqq\{A\subset\mathbb{Z}\mid|A|<\infty\}.
\end{equation}
For $A\in\Zfinsubset$, the set of operators supported on $A$ is defined as $\bound_0^{A}\coloneqq\otimes_{i\in A}\bound_{0i}$.  
We set $\zbound^{\emptyset}\coloneqq\C$.  
The space of (formal) operators on $\Z$ is defined as 
\begin{equation}
\formalop\coloneqq\{ \Phi:\Zfinsubset\ni A\mapsto\Phi(A)\in\zbound^{A}\}.
\end{equation}
We formally write $\Phi=\sum_{A\in\Zfinsubset}\Phi(A)$ and define its adjoint as $\Phi^\dagger\coloneqq\sum_{A\in\Zfinsubset}\Phi(A)^\dagger$.

Next, we define the notion of locality for elements of $\formalop$.
For $A\in\Zfinsubset\setminus\{\emptyset\}$, the range of $A$ is defined as
\begin{equation}
    r(A)\coloneqq\max A-\min A+1\in\Z_{\geq 1},
\end{equation}
and we set $r(\emptyset)\coloneqq0$.
For $\Phi\in\formalop\setminus\{0_\formalop\}$, its range is defined by
\begin{equation}
    r(\Phi)\coloneqq\sup\{r(A)\mid\Phi(A)\neq0_A\}\in\Z_{\geq0}\cup\{\infty\},
\end{equation}
with $r(0_\formalop)\coloneqq 0$.  
We say that $\Phi$ is $r$-local if $r(\Phi)=r$.
If $\Phi(\{i,i+1\})=\hat{h}_i^{(2)}$ and $\Phi(\{i\})=\hat{h}_i^{(1)}$, 
then $\Phi$ corresponds to the Hamiltonian with nearest-neighbor interactions and on-site potentials discussed in Section~\ref{sec:setup}.  
This is $2$-local whenever $\hat{h}_i^{(2)}\neq0$.

For $A\in\Zfinsubset$ and $x\in\Z$, we denote by
\begin{equation}
    \tau_A(x):\Hilb^{\otimes A}\mapsto\Hilb^{\otimes (A+x)}
\end{equation}
the unitary translation map, where $A+x\coloneqq \{a+x\mid a\in A\}\in\Zfinsubset$. 
The translation superoperator $\mathcal{T}(x)$ is defined by
\begin{align}
    \mathcal{T}(x)&:\formalop\ni\Phi=\sum_{A\in\Zfinsubset}\Phi(A)\nonumber\\
    &\mapsto \sum_{A\in\Zfinsubset} \tau_A(x)^\dagger\Phi(A+x)\tau_A(x)\in\formalop.
\end{align}
We also write $\mathcal{T}=\mathcal{T}(1)$.  
This coincides with the translation map introduced in Section~\ref{sec:setup}.

We now introduce some subsets of $\formalop$:
\begin{align}
    \formalsa&\coloneqq\{\Phi\in\formalop\mid\Phi^\dagger=\Phi\},\\
    \localop&\coloneqq\{\Phi\in\formalop\mid r(\Phi)<\infty\},\\
    \transop&\coloneqq\{\Phi\in\formalop\mid\mathcal{T}(\Phi)=\Phi\}.
\end{align}

For $\Phi\in\formalop$, its restriction to $B\in\Zfinsubset$ is defined as
\begin{equation}
    \Phi|_B\coloneqq\sum_{A\subset B}\Phi(A) \in\bound^{\otimes B},
\end{equation}
where we used the natural embedding
\begin{equation}
    \bound^{\otimes A}\ni X\mapsto X\otimes I_{B\setminus A}\in\bound^{\otimes B}
\end{equation}
(with $I_C$ the identity on $C\in\Zfinsubset$) to regard $\bound^{\otimes A}\subset\bound^{\otimes B}$.  
Accordingly, when viewing $\bound^{\otimes B}=\oplus_{A\subset B}\zbound^A$, the projection onto each component is written as
\begin{equation}
    \bound^{\otimes B}\ni Y\mapsto Y_A\in\zbound^A.
\end{equation}

\subsection{Adjoint map and conserved quantities}
For $\Phi\in\localop$, we define the adjoint map $\ad_\Phi(=[\Phi,\bullet]):\formalop\to\formalop$ as follows.  
For $\Psi\in\formalop$ and $A\in\Zfinsubset$, we set
\begin{equation}
\ad_\Phi(\Psi)(A)\coloneqq
[\Phi,\Psi](A)\coloneqq
[\Phi|_B,\Psi|_B]_A,
\end{equation}
where $B\in\Zfinsubset$ is any sufficiently large finite set containing $A$.  
Since for $C,C'\in\Zfinsubset$,
\begin{align}
   [\Phi(C),\Psi(C')]_A \neq0\ &\Rightarrow\ 
   C\triangle C'\subsetneq A \subset C\cup C' \nonumber\\
   &(\Rightarrow C\cap A\neq\emptyset),
\end{align}
the definition is independent of the choice of $B$.  
Here $C\triangle C'$ denotes the symmetric difference $C\cup C'\setminus(C\cap C')$.

The map $\ad_{\Phi}$ is $\C$–linear in $\Psi$, 
and $\Phi\mapsto\ad_{\Phi}$ is also linear.  
Moreover, it satisfies the following properties:
\begin{itemize}
    \item $r([\Phi,\Psi])\leq r(\Phi)+r(\Psi)-1$,
    \item $[\Phi,\Phi]=0$,
    \item $[\Phi,\Psi]^\dagger=-[\Phi^\dagger,\Psi^\dagger]$.
\end{itemize}
The first property implies that $\localop$ is invariant under $\ad_\Phi$.  
Thus, for $\Phi_1,\Phi_2\in\localop$, the map $\ad_{[\Phi_1,\Phi_2]}$ is well-defined, and it obeys the Jacobi identity:
\begin{equation}
    \ad_{\Phi_1}\circ\ad_{\Phi_2}-\ad_{\Phi_2}\circ\ad_{\Phi_1}=\ad_{[\Phi_1,\Phi_2]}.
\end{equation}
This follows because the value at each $A\in\Zfinsubset$ can be evaluated within a sufficiently large finite region, where ordinary operator commutators satisfy the Jacobi identity.

We now define conserved quantities in this framework.
\begin{definition}
    For $\Phi\in\localop$, an operator $\Psi\in\formalop$ is called a conserved quantity of $\Phi$ if $[\Phi,\Psi]=0$.  
    In particular, a conserved quantity $\Psi\in\localop$ is said to be $r(\Psi)$–local.
\end{definition}

\subsection{Boost operator and main theorem}
Next, we define the boost operator~\cite{tetelmanLorentzGroupTwodimensional1982,sogoBoostOperatorIts1983,thackerCornerTransferMatrices1986}.
\begin{definition}
    For $\Phi\in\localop$, an operator $\B_{\Phi}\in\localop$ is called a boost operator associated with $\Phi$ if
    \begin{equation}
        \mathcal{T}(\B_{\Phi})-\B_{\Phi}=\Phi.
    \end{equation}
\end{definition}

A key property of boost operators is the following~\cite{pozsgayCurrentOperatorsIntegrable2020,suraceWeakIntegrabilityBreaking2023}.
\begin{prop}\label{prop:boost_inv}
    Let $\Phi\in\localop\cap\transop$ and $\Psi\in\transop$.  
    If $\Psi$ is a conserved quantity of $\Phi$, then $[\B_{\Phi},\Psi]\in\transop$.
\end{prop}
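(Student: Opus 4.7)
The plan is to derive the claim from two ingredients: the homomorphism property $\mathcal{T}\bigl([\Phi_1,\Phi_2]\bigr)=[\mathcal{T}(\Phi_1),\mathcal{T}(\Phi_2)]$ of the translation superoperator (valid whenever $\Phi_1\in\localop$, so that the right-hand bracket is well-defined in the formal-operator framework), and the defining identity $\mathcal{T}(\B_{\Phi})-\B_{\Phi}=\Phi$ of the boost operator. Granting both, the proposition reduces to a one-line computation.

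Concretely, once bracket equivariance of $\mathcal{T}$ is available, I would use $\Psi\in\transop$ (so $\mathcal{T}(\Psi)=\Psi$), the boost identity, and the hypothesis $[\Phi,\Psi]=0$ to write
\begin{equation*}
\mathcal{T}\bigl([\B_{\Phi},\Psi]\bigr)
=[\mathcal{T}(\B_{\Phi}),\mathcal{T}(\Psi)]
=[\B_{\Phi}+\Phi,\Psi]
=[\B_{\Phi},\Psi]+[\Phi,\Psi]
=[\B_{\Phi},\Psi],
\end{equation*}
which is exactly the desired statement $[\B_{\Phi},\Psi]\in\transop$. This reproduces rigorously the heuristic argument sketched in the main text, with $\Phi$ playing the role of $\hat{H}$ there.

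The main obstacle is therefore establishing bracket equivariance of $\mathcal{T}$ inside the formal-operator framework, in which $\Psi$ need not belong to $\localop$ and $\ad_{\B_{\Phi}}$ is defined only componentwise through restrictions to sufficiently large finite regions. I would verify it pointwise: for each $A\in\Zfinsubset$, both sides of $\mathcal{T}\bigl([\Phi_1,\Phi_2]\bigr)(A)=[\mathcal{T}(\Phi_1),\mathcal{T}(\Phi_2)](A)$ can be evaluated inside a common finite subset $B\supset A$, chosen large enough that all terms in $\Phi_1|_B$ and $\Psi|_B$ contributing to $A$ are captured (enlarging $B$ by more than $r(\Phi_1)$ on each side of $A$ suffices, by the implication displayed just after the definition of $\ad_\Phi$). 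Inside $B$, the computation is an ordinary commutator in the finite-dimensional algebra $\bound^{\otimes B}$, and the claim then follows from the elementary fact that the unitary conjugation $\tau_{B}(1)^{\dagger}(\cdot)\tau_{B}(1)$ is a $*$-algebra homomorphism and hence preserves commutators. The analogous compatibility for restriction, $\bigl(\mathcal{T}(\Psi)\bigr)|_B=\tau_{B+1}(1)^{\dagger}\,\Psi|_{B+1}\,\tau_{B+1}(1)$ up to terms acting trivially on $A$, is a purely combinatorial check on finite supports. Once these compatibilities are pinned down, no further approximation or norm estimate is required, and the proposition follows.
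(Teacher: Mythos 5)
Your proposal is correct and follows essentially the same approach as the paper's proof: you verify the translation invariance of $[\B_\Phi,\Psi]$ componentwise, pushing the translation conjugation through the bracket via the $*$-homomorphism property of $\tau_B(1)$, then invoking $\mathcal{T}(\B_\Phi)=\B_\Phi+\Phi$ and $[\Phi,\Psi]=0$. The only difference is presentational: you factor out the bracket equivariance $\mathcal{T}([\Phi_1,\Phi_2])=[\mathcal{T}(\Phi_1),\mathcal{T}(\Phi_2)]$ as a standalone lemma before applying it, whereas the paper carries out the same pointwise computation in a single chain of equalities.
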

\begin{proof}
Take any $A\in\Zfinsubset$.  
For a sufficiently large $B\supset A$, we have
\begin{align}
    &\tau_A(1)\ad_{\B_{\Phi}}(\Psi)(A)\tau_A(1)^\dagger\nonumber\\
    &=\tau_A(1)[\B_{\Phi}|_B,\Psi|_B]_A\tau_A(1)^\dagger\nonumber\\
    &=\left(\tau_B(1)[\B_{\Phi}|_B,\Psi|_B]\tau_B(1)^\dagger\right)_{A+1}\nonumber\\
    &=\bigl[\tau_B(1)\B_{\Phi}|_B\tau_B(1)^\dagger,\,
    \tau_B(1)\Psi|_B\tau_B(1)^\dagger\bigr]_{A+1}\nonumber\\
    &=\bigl[\mathcal{T}(\B_\Phi)|_{B+1},\,\Psi|_{B+1}\bigr]_{A+1}\nonumber\\
    &=[\B_{\Phi}|_{B+1},\Psi|_{B+1}]_{A+1}
    +[\Phi_{B+1},\Psi|_{B+1}]_{A+1}\nonumber\\
    &=[\B_{\Phi}|_{B+1},\Psi|_{B+1}]_{A+1}
    =\ad_{\B_{\Phi}}(\Psi)(A+1),
\end{align}
where in the last line we used that $\Psi$ is a conserved quantity of $\Phi$.
\end{proof}

Under these definitions, 
Theorem~\ref{thm:main} can be restated as the following general theorem.
\begin{thm}\label{thm:main_math}
    Let $\Phi\in\formalsa\cap\localop\cap\transop$ be the Hamiltonian of the system, 
    with $\B_{\Phi}$ a boost operator associated with $\Phi$.  
    If $[\B_{\Phi},\Phi]$ is a conserved quantity of $\Phi$,  
    then for any $n\geq1$, 
    \begin{equation}
        \Phi^{(n)}\coloneqq(\ad_{\B_{\Phi}})^{n-1}(\Phi)
    \end{equation}
    is a translationally invariant conserved quantity of $\Phi$.  
    Furthermore, the family $\{\Phi^{(n)}\}_{n=1}^\infty$ consists of mutually commuting operators.
\end{thm}
\change{In the above discussion, 
we do not assume that $\Phi$ is at most $2$-local (which is equivalent to $r(\Phi)\le 2$); 
rather, it is sufficient that $\Phi$ has finite-range interactions.}

\section{Proof of Main Theorem}\label{sec:proof}
For notational simplicity, 
we use the same notation as in Theorem~\ref{thm:main} throughout the proof. 
The proof of Theorem~\ref{thm:main_math} proceeds in exactly the same way as that of Theorem~\ref{thm:main}, 
upon replacing $\hat{H}$, $\hat{Q}^{(n+1)}$, and $\hat{B}_{\hat{H}}$ with $\Phi$, $\Phi^{(n)}$, and $\mathfrak{B}_\Phi$, respectively.

If all $\hat{Q}^{(n)}$ are conserved quantities of $\hat{H}$, 
their mutual commutativity follows immediately from the Jacobi identity. 
In fact, the following standard lemma~\cite{deleeuwClassifyingIntegrableSpin12019,zhangBootstrappingRmatrix2026} holds. 

\begin{lem}\label{lem:property_boost}
    If $[\hat{H},\hat{Q}^{(m+1)}]=0$ holds for all $m\geq 1$, 
    then $[\hat{Q}^{(k+1)},\hat{Q}^{(l+1)}]=0$ holds for $k,l\geq 1$.
\end{lem}
\begin{proof}
    We proceed by induction on $M=k+l(\geq 2)$.
    For $M=2$, it is trivial that $[\hat{H},\hat{H}]=0$ holds.  
    Assume that the claim is valid up to $M=M'-1\geq 2$, and consider $M=M'$.  
    From the Jacobi identity and the induction hypothesis, 
    for $1\leq k\leq M'-1$ with $l=M'-k$, we have
    \begin{align}
    [\hat{Q}^{(k+1)},\hat{Q}^{(l+1)}]
    &=-[\hat{Q}^{(k)},\hat{Q}^{(l+2)}]+[\hat{B}_{\hat{H}},[\hat{Q}^{(k)},\hat{Q}^{(l+1)}]]\nonumber\\
    &=-[\hat{Q}^{(k)},\hat{Q}^{(l+2)}]\nonumber\\
    &=[\hat{Q}^{(k-1)},\hat{Q}^{(l+3)}]-[\hat{B}_{\hat{H}},[\hat{Q}^{(k-1)},\hat{Q}^{(l+2)}]]\nonumber\\
    &=[\hat{Q}^{(k-1)},\hat{Q}^{(l+3)}]\nonumber\\
    &=\dots\nonumber\\
    &=(-1)^{k-1}[\hat{Q}^{(2)},\hat{Q}^{(M')}]=0, 
    \end{align}
    where we used the assumption $[\hat{H},\hat{Q}^{(M')}]=0$ in the last equality. 
    Thus the claim follows.
\end{proof}

From this lemma, it suffices to prove Theorem~\ref{thm:main} by showing: 
If $[\hat{H},\hat{Q}^{(3)}]=0$ holds, 
then $[\hat{H},\hat{Q}^{(n)}]=0$ also holds for all $n\geq4$.  
The key step is the following observation.
\begin{lem}\label{lem:key}
    If $[\hat{H},\hat{Q}^{(3)}]=0$ and $[\hat{H},\hat{A}]=0$ holds for some local operator $\hat{A}$, 
    then $[\hat{H},[\hat{H},[\hat{B}_{\hat{H}},\hat{A}]]]=0$.
\end{lem}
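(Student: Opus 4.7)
My plan is to reduce the quadruple commutator $[\hat{H},[\hat{H},[\hat{B}_{\hat{H}},\hat{A}]]]$ to zero by two successive applications of the Jacobi identity. The crucial observation is that the boost $\hat{B}_{\hat{H}}$ enters only through its commutator with $\hat{H}$, which is precisely $\hat{Q}^{(3)}$ by definition; Jacobi should therefore allow us to replace the nonlocal boost by a local, conserved operator and reduce the claim to a trivial identity.

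Concretely, I would first expand $[\hat{H},[\hat{B}_{\hat{H}},\hat{A}]] = [[\hat{H},\hat{B}_{\hat{H}}],\hat{A}] + [\hat{B}_{\hat{H}},[\hat{H},\hat{A}]]$. The second summand vanishes by the hypothesis $[\hat{H},\hat{A}]=0$, while $[\hat{H},\hat{B}_{\hat{H}}]=-\hat{Q}^{(3)}$ turns the first summand into $[\hat{A},\hat{Q}^{(3)}]$. It then suffices to show $[\hat{H},[\hat{A},\hat{Q}^{(3)}]]=0$, and a second application of Jacobi expresses this as $[[\hat{H},\hat{A}],\hat{Q}^{(3)}] + [\hat{A},[\hat{H},\hat{Q}^{(3)}]]$; both terms vanish by the two assumed conservation laws.

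The one non-algebraic point I anticipate as an obstacle is foundational: since the coefficients $j$ of $\hat{B}_{\hat{H}}$ are unbounded, the boost is not a local operator in the strict sense, so the Jacobi identity and the nested commutators with $\hat{B}_{\hat{H}}$ have to be interpreted in the formal framework set up in the Supplemental Material. I would therefore check, or directly cite from that appendix, that $[\hat{B}_{\hat{H}},\cdot]$ defines a derivation on the space of local operators containing $\hat{H}$, $\hat{A}$, and $\hat{Q}^{(3)}$, so that the two Jacobi manipulations above are legitimate at the formal level. Once this is secured, the two-line calculation closes the proof.
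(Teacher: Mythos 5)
Your proof is correct and follows essentially the same two-step Jacobi argument as the paper: first reduce $[\hat{H},[\hat{B}_{\hat{H}},\hat{A}]]$ to $-[\hat{Q}^{(3)},\hat{A}]$ using $[\hat{H},\hat{A}]=0$, then apply Jacobi again together with $[\hat{H},\hat{Q}^{(3)}]=0$ to conclude. Your foundational worry is well placed in spirit, though note that in the paper's framework $\hat{B}_{\hat{H}}$ actually \emph{is} a local operator (it has range $2$; the unbounded coefficients $j$ do not increase the range, they only break translation invariance), so $\ad_{\hat{B}_{\hat{H}}}$ and the Jacobi identity are already legitimate within the Supplemental Material's formal operator framework.
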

\begin{proof}
    From the Jacobi identity, we have
    \begin{align}
        [\hat{H},[\hat{B}_{\hat{H}},\hat{A}]]
        &=-[\hat{Q}^{(3)},\hat{A}]+[\hat{B}_{\hat{H}},[\hat{H},\hat{A}]]\nonumber\\
        &=-[\hat{Q}^{(3)},\hat{A}].
    \end{align}
    Since both $\hat{Q}^{(3)}$ and $\hat{A}$ commute with $\hat{H}$, we obtain
    \begin{align}
        [\hat{H},[\hat{H},[\hat{B}_{\hat{H}},\hat{A}]]]
        &=-[\hat{H},[\hat{Q}^{(3)},\hat{A}]]=0,
    \end{align}
    which proves the claim.
\end{proof}

In view of this lemma, it remains to establish the following.
\begin{lem}\label{lem:double_commutator}
    For a translationally invariant $n(\geq0)$-local operator $\hat{A}^{(n)}$,  
    if $[\hat{H},[\hat{H},\hat{A}^{(n)}]]=0$, then $[\hat{H},\hat{A}^{(n)}]=0$.
\end{lem}
\begin{proof}
    Consider a finite system of size $N$ with the periodic boundary condition.  
    Note that the commutativity of translationally invariant local operators is a local property, 
    and hence independent of $N$.  
    Thus, for sufficiently large $N$ compared with $n$, the assumption can be replaced with the finite-size relation
    $[\hat{H}_{[N]},[\hat{H}_{[N]},\hat{A}^{(n)}_{[N]}]]=0$ 
    (See Corollary~\ref{cor:locality_of_conservation} in 
    Appendix~\ref{sec:correspondence} for a formal proof).  
    By considering the Hilbert-Schmidt norm of $[\hat{H}_{[N]},\hat{A}^{(n)}_{[N]}]$, 
    we obtain 
    \begin{align}
    \|[\hat{H}_{[N]},\hat{A}^{(n)}_{[N]}]\|_{\mathrm{HS}}^2
    &=\Tr([\hat{H}_{[N]},\hat{A}^{(n)}_{[N]}]^\dagger [\hat{H}_{[N]},\hat{A}^{(n)}_{[N]}])\nonumber\\
    &=\Tr([{\hat{A}^{(n)\dagger}_{[N]}},\hat{H}_{[N]}][\hat{H}_{[N]},\hat{A}^{(n)}_{[N]}])\nonumber\\
    &=\Tr({\hat{A}^{(n)\dagger}_{[N]}}[\hat{H}_{[N]},[\hat{H}_{[N]},\hat{A}^{(n)}_{[N]}]])=0,
    \end{align}
    which implies $[\hat{H}_{[N]},\hat{A}^{(n)}_{[N]}]=0$.
    Therefore, the corresponding infinite-system relation $[\hat{H},\hat{A}^{(n)}]=0$ also holds.
\end{proof}
It should be noted that we cannot dispense with the assumption of translational invariance in Lemma~\ref{lem:double_commutator}, 
since $[\hat{H},[\hat{H},\hat{B}_{\hat{H}}]]=0$ does not necessarily imply that 
$\hat{Q}^{(3)}=-[\hat{H},\hat{B}_{\hat{H}}]=0$. 

We now summarize the proof of Theorem~\ref{thm:main}.
\begin{proof}[Proof of Theorem~\ref{thm:main}]
We prove by induction 
to show that for all $n\geq 2$, 
$\hat{Q}^{(n)}$ is a translationally invariant conserved quantity of $\hat{H}$.  
For $n=2$, the statement is trivial since $\hat{Q}^{(2)}=\hat{H}$.  

Assume the claim holds for $n=m-1\geq 2$, and consider the case $n=m$. 
As stated in Eq.~\eqref{eq:trans_inv_from_boost} (and Proposition~\ref{prop:boost_inv}), 
if $\hat{Q}^{(m-1)}$ is a translationally invariant conserved quantity, 
then $\hat{Q}^{(m)}$ is also translationally invariant.  
From Lemma~\ref{lem:key}, we have 
$[\hat{H},[\hat{H},\hat{Q}^{(m)}]]=0$, 
and by Lemma~\ref{lem:double_commutator}, 
it follows that $[\hat{H},\hat{Q}^{(m)}]=0$. 
This completes the proof.
\end{proof}

\section{Dichotomy on integrability}\label{sec:dichotomy}
For quantum spin chains with local interactions, 
a generic system that is not integrable is expected to be (completely) nonintegrable, 
meaning that it possesses no local conserved quantities other than trivial ones such as the Hamiltonian and the total magnetization. 
Indeed, this expectation has been rigorously confirmed for spin-$1/2$ chains with parity-symmetric interactions~\cite{yamaguchiCompleteClassificationIntegrability2024,yamaguchiProofAbsenceLocal2024,shiraishiCompleteClassificationIntegrability2025} 
by applying the method developed in Ref.~\cite{shiraishiProofAbsenceLocal2019} to rigorously prove nonintegrability for specific systems. 
However, these results rely on an ad hoc, case-by-case verification of nonintegrability. 

In the following, we address how to rigorously establish this belief for a broader class of Hamiltonians, 
namely the \emph{dichotomy} on integrability: either there exist infinitely many local conserved quantities, or there exist none beyond the trivial ones. 
Specifically, we combine Theorem~\ref{thm:main}, which provides a sufficient condition for integrability, 
with a sufficient condition for nonintegrability 
that the author has recently proved~\cite{hokkyoRigorousTestQuantum2025}.

We assume that the interaction $\hat{h}^{(2)}$ is sufficiently complicated in the sense of Ref.~\cite{hokkyoRigorousTestQuantum2025}. 
First, for any $\hat{a}^{(1)}\in\bound_0$, we assume that
\begin{equation}
  \begin{aligned}
    [\hat{I}\otimes \hat{a}^{(1)},\hat{h}^{(2)}_{1}]=0 &\;\Rightarrow\; \hat{a}^{(1)}=0,\\
    [\hat{a}^{(1)}\otimes \hat{I},\hat{h}^{(2)}_{1}]=0 &\;\Rightarrow\; \hat{a}^{(1)}=0,
  \end{aligned}
  \label{eq:assumption_injective}
\end{equation}
which is a slightly stronger version of the condition that an on-site operator $\hat{a}^{(1)}$ is not a conserved quantity. 

Next, for a family of operators $\hat{a}^{(2)}_i\in\bound_{0i}\otimes\bound_{0i+1}$, we assume that
\begin{equation}
    [\hat{h}_i^{(2)},\hat{a}^{(2)}_{i+1}] = [\hat{a}^{(2)}_i,\hat{h}_{i+1}^{(2)}]\quad(\forall i)
    \;\Rightarrow\;
    \hat{a}^{(2)}_i=\alpha\,\hat{h}_i^{(2)}\quad(\forall i),
    \label{eq:assumption_1dim}
\end{equation}
where $\alpha\in\C$ is a constant independent of $i$.  
This condition implies that the only two-local conserved quantity is the Hamiltonian itself.

\begin{prop}\label{prop:dichotomy}
Let $\mathbb{H}$ be a set of Hamiltonians satisfying assumptions~\eqref{eq:assumption_injective} and \eqref{eq:assumption_1dim}. For any $\hat{H}\in\mathbb{H}$, suppose that the following holds. 
\begin{enumerate}[label=(\Alph*)]
\item Let $\hat{Q}^{(3)}=[\hat{B}_{\hat{H}},\hat{H}]$. If there exists a two-local operator $\hat{R}^{(2)}$ such that $[\hat{H},\hat{Q}^{(3)}+\hat{R}^{(2)}]$ is at most two-local, then $[\hat{H},\hat{Q}^{(3)}]=0$.
\label{cond:dichotomy}
\end{enumerate}
Then exactly one of the following holds. 
\begin{itemize}
    \item $\hat{H}$ is integrable in the sense that there exist mutually commuting $k$-local conserved quantities for all $k\geq 2$.
    \item $\hat{H}$ is nonintegrable in the sense that the only $k$-local conserved quantities are the Hamiltonian itself ($k=2$) and one-local operators.
\end{itemize}
\end{prop}

\begin{proof}
From condition~\ref{cond:dichotomy}, one of the following holds. 
\begin{itemize}
    \item $\hat{H}$ commutes with $\hat{Q}^{(3)}$.
    \item For any two-local operator $\hat{R}^{(2)}$, the commutator $[\hat{H},\hat{Q}^{(3)}+\hat{R}^{(2)}]$ is three-local.
\end{itemize}
In the former case, Theorem~\ref{thm:main} implies the first conclusion.  
In the latter case, the result of Ref.~\cite{hokkyoRigorousTestQuantum2025} implies the second. 
\end{proof}

Therefore, once condition~\ref{cond:dichotomy} is established, the system admits a complete dichotomy: it is either integrable or nonintegrable. Conversely, if a ``partially integrable'' system were to exist, then within the framework of assumptions~\eqref{eq:assumption_injective} and \eqref{eq:assumption_1dim}, it would necessarily provide a counterexample to condition~\ref{cond:dichotomy}.

\section{Generalizations}\label{sec:generalization}
\subsection{Proof of the Grabowski--Mathieu conjecture for general boosts} \label{sec:proof_GM2}
Grabowski and Mathieu conjectured that the existence of a boost operator 
$\hat{B}$ with $[\hat{B},\hat{H}]$ conserved implies integrability of $\hat{H}$~\cite{grabowskiIntegrabilityTestSpin1995}. 
Theorem~\ref{thm:main} established this for the canonical boost $\hat{B}=\hat{B}_{\hat{H}}$ 
which is obtained from the Yang--Baxter equation. 
The same reasoning applies to general boosts for Hamiltonians satisfying assumptions~\eqref{eq:assumption_injective} and~\eqref{eq:assumption_1dim}. 
Assume that $\hat{B}$ is Hermitian and of the form 
\begin{equation}
    \hat{B}=\sum_{j\in\Z} \hat{b}^{(2)}_{j}+\sum_{j\in\Z}\hat{b}^{(1)}_j,
    \label{eq:general_boost}
\end{equation}
with $\hat{b}^{(2)}_{j}\in\bound_{0j}\otimes\bound_{0j+1}$ and $\hat{b}^{(1)}_j\in\bound_{0j}$. 
That is, $\hat{B}$ is at most two-local. 
Setting $\hat{Q}^{(2)}\coloneqq\hat{H}$ and $\hat{Q}^{(n+1)}=[\hat{B},\hat{Q}^{(n)}]$, 
we obtain the following theorem. 
\begin{thm}\label{thm:proof_GM2}
Let $\hat{H}$ satisfy assumptions~\eqref{eq:assumption_injective} and~\eqref{eq:assumption_1dim}. 
If $\hat{Q}^{(3)}=[\hat{B},\hat{H}]$ is a nonzero conserved quantity, 
then for any $n\geq2$, $\hat{Q}^{(n)}$ is a translationally invariant conserved quantity of $\hat{H}$. 
Moreover, the operators in $\{\hat{Q}^{(n)}\}_{n=2}^\infty$ mutually commute. 
In addition, $\hat{B}$ is restricted to the form
\begin{equation}
    \hat{B}\propto \sum_{j\in\Z}j(\hat{h}^{(2)}_{j}+\hat{h}^{(1)}_j)+\hat{R}+a\hat{H},
    \label{eq:form_of_boost}
\end{equation}
with $a\in\R$, where $\hat{R}$ is one-local and satisfies
\begin{equation}
    [\hat{H},\mathcal{T}(\hat{R})-\hat{R}]=0. 
\end{equation}
\end{thm}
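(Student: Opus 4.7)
The plan is to carry out the proof in two stages. In Stage~1, I would derive the restricted form~\eqref{eq:form_of_boost} of $\hat{B}$ directly from the assumption that $\hat{Q}^{(3)}=[\hat{B},\hat{H}]$ is a nonzero conserved quantity, using assumptions~\eqref{eq:assumption_injective} and~\eqref{eq:assumption_1dim}. In Stage~2, I would use this form to conclude translational invariance, conservation, and mutual commutativity of every $\hat{Q}^{(n)}$ by reducing to a mild generalization of the proof of Theorem~\ref{thm:main}; the key observation making the reduction work is that a $\hat{B}$ of the stated form acts on any $\hat{H}$-conserved quantity in essentially the same way as the canonical boost $\hat{B}_{\hat{H}}$ with $\hat{C}$-slot set to $\hat{R}$, up to an $\hat{H}$-term that drops out of every commutator $[\hat{B},\hat{Q}^{(n)}]$.

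For Stage~1, I would analyze $[\hat{H},\hat{Q}^{(3)}]=0$ locality by locality. The three-local density of $\hat{Q}^{(3)}$ at sites $(k,k+1,k+2)$ is $\rho_k=[\hat{b}^{(2)}_k,\hat{h}^{(2)}_{k+1}]+[\hat{b}^{(2)}_{k+1},\hat{h}^{(2)}_k]$, and the vanishing of the four-local part of $[\hat{Q}^{(3)},\hat{H}]$ produces the site-indexed identity $[\rho_l,\hat{h}^{(2)}_{l+2}]=[\hat{h}^{(2)}_l,\rho_{l+1}]$. Writing $\hat{a}^{(2)}_k:=\hat{b}^{(2)}_k-ck\,\hat{h}^{(2)}_k$ with $c$ chosen to absorb the linear-in-$k$ slope of $\hat{b}^{(2)}_k$, this recursion should, after a lower-order simplification, reduce to the hypothesis of assumption~\eqref{eq:assumption_1dim}, giving $\hat{a}^{(2)}_k=\alpha\hat{h}^{(2)}_k$ and hence $\hat{b}^{(2)}_k=c(k+a)\hat{h}^{(2)}_k$ with $a=\alpha/c$. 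A parallel analysis of the two- and one-local parts of $[\hat{Q}^{(3)},\hat{H}]=0$, combined with assumption~\eqref{eq:assumption_injective} used to eliminate strictly local symmetry ambiguities, should then identify $\hat{b}^{(1)}_k$ with $c(k\hat{h}^{(1)}_k+\hat{r}_k)$, where $\hat{r}_k$ defines a one-local operator $\hat{R}:=\sum_k\hat{r}_k$ whose discrete derivative $\hat{R}':=\mathcal{T}(\hat{R})-\hat{R}$ lies in the commutant of $\hat{H}$---precisely the form~\eqref{eq:form_of_boost}. This stage is the main obstacle: pinning down the correct ``slope'' $c$ and showing that the residual $\hat{a}^{(2)}_k$ is exactly the constant multiple of $\hat{h}^{(2)}$ required by~\eqref{eq:assumption_1dim} is a nontrivial inductive analysis, and the one-local step is particularly delicate because assumption~\eqref{eq:assumption_injective} excludes only strictly local symmetries---which is why the theorem weakens $[\hat{H},\hat{R}]=0$ to $[\hat{H},\mathcal{T}(\hat{R})-\hat{R}]=0$.

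For Stage~2, write $\hat{B}=c(\hat{B}'+a\hat{H})$ with $\hat{B}':=\sum_j j(\hat{h}^{(2)}_j+\hat{h}^{(1)}_j)+\hat{R}$. Since $[\hat{H},\hat{Q}^{(n)}]=0$ for any conserved $\hat{Q}^{(n)}$, the $a\hat{H}$ term drops out of every $[\hat{B},\hat{Q}^{(n)}]$, so the recursion is effectively driven by $\hat{B}'$. A direct computation yields $\mathcal{T}(\hat{B}')-\hat{B}'=\hat{H}+\hat{R}'$, so the argument behind Theorem~\ref{thm:main} carries over as soon as one can establish, by induction on $n$, the auxiliary identity $[\hat{R}',\hat{Q}^{(n)}]=0$; this is what ensures translational invariance of each $[\hat{B}',\hat{Q}^{(n)}]$. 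A crucial structural fact helping the induction is that $[\hat{R}',\hat{H}]=0$, being a vanishing of strictly local densities, forces $[\hat{R}',\hat{h}^{(2)}_k]=0$ and $[\hat{R}',\hat{h}^{(1)}_k]=0$ for every individual $k$, whence $[\hat{R}',\sum_j j(\hat{h}^{(2)}_j+\hat{h}^{(1)}_j)]=0$ and therefore $[\hat{R}',\hat{B}']=[\hat{R}',\hat{R}]$. The induction is then closed by combining the Jacobi identity with Lemma~\ref{lem:double_commutator} applied to the translationally invariant operator $[\hat{R}',\hat{B}']$, together with the structural constraint on translationally invariant at-most-two-local conserved quantities implied by~\eqref{eq:assumption_1dim}. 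Once translational invariance of every $\hat{Q}^{(n)}$ is secured, the proofs of Lemmas~\ref{lem:property_boost},~\ref{lem:key}, and~\ref{lem:double_commutator}---which invoke only the Jacobi identity and the recursion $\hat{Q}^{(n+1)}=[\hat{B},\hat{Q}^{(n)}]$---apply verbatim and yield conservation of each $\hat{Q}^{(n)}$ together with mutual commutativity of $\{\hat{Q}^{(n)}\}_{n=2}^{\infty}$.
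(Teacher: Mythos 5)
Your two-stage plan (first pin down the form of $\hat{B}$, then reduce to Theorem~\ref{thm:main} plus a translational-invariance bootstrap) matches the paper's architecture, but both stages have gaps that would prevent the proof from closing as written.

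\emph{Stage~1.} The paper does not run a raw recursion on the four-local part of $[\hat H,\hat Q^{(3)}]$; instead it invokes the structural result of Ref.~\cite{hokkyoRigorousTestQuantum2025} (valid under assumptions~\eqref{eq:assumption_injective} and~\eqref{eq:assumption_1dim}) to conclude directly that the three-local density $[\hat b_j^{(2)},\hat h_{j+1}^{(2)}]-[\hat h_j^{(2)},\hat b_{j+1}^{(2)}]$ must equal $-\gamma[\hat h_j^{(2)},\hat h_{j+1}^{(2)}]$ with $j$-independent $\gamma$, and only then does the subtraction $\hat a_j^{(2)}=\hat b_j^{(2)}-\gamma j\,\hat h_j^{(2)}$ land in the hypothesis of~\eqref{eq:assumption_1dim}. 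Your phrase ``this recursion should \dots reduce to the hypothesis of~\eqref{eq:assumption_1dim}'' elides exactly this nontrivial input: assumption~\eqref{eq:assumption_1dim} concerns two-local objects, while $\rho_l$ is three-local, so the reduction is not automatic. More seriously, your sketch for the one-local part is entirely hand-waved; the paper instead proves a dedicated lemma (Lemma~\ref{lem:one-local_double_commutator}): for a one-local $\hat A$, if $[\hat H,[\hat H,\hat A]]=0$ then $[\hat H,\hat A]=0$ (this uses~\eqref{eq:assumption_1dim} and Hermiticity of $\hat h^{(1)}$, not translational invariance). This lemma is used twice: once to show the leftover one-local piece $\hat X'=\mathcal T(\hat B)-\hat B-\gamma\hat H$ commutes with $\hat H$, and once to prove that the slope $\gamma$ cannot be zero (else $\hat Q^{(3)}=[\hat R',\hat H]$ would vanish, contradicting $\hat Q^{(3)}\neq0$). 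Your proposal contains neither this lemma nor the $\gamma\neq0$ argument; without it the proportionality constant could vanish and the claimed form~\eqref{eq:form_of_boost} would not follow.

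\emph{Stage~2.} Your structural observation that $[\hat R',\hat H]=0$ forces $[\hat R',\hat h_k^{(2)}]=0$ and $[\hat R',\hat h_k^{(1)}]=0$ site-by-site (and hence $[\hat R',\hat B']=[\hat R',\hat R]$) is correct and is in fact close in spirit to the paper's Lemma~\ref{lem:one-local_conserve}. However, the induction you gesture at is not closed. The paper's Proposition~\ref{prop:trans_inv} proceeds by defining $\hat X^{(1)}=\hat R'$, $\hat X^{(k+1)}=[\hat B,\hat X^{(k)}]$ and showing by induction that every $\hat X^{(k)}$ is one-local (via Lemma~\ref{lem:one-local_conserve}) and conserved (via the double-commutator trick of Lemma~\ref{lem:key} plus Lemma~\ref{lem:one-local_double_commutator} again); the commutation $[\hat X^{(l)},\hat Q^{(n)}]=0$ then follows from the Jacobi identity by a second induction on $n$. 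Your attempt to instead apply Lemma~\ref{lem:double_commutator} to $[\hat R',\hat B']$ is problematic: Lemma~\ref{lem:double_commutator} requires translational invariance, but $\hat R'$ and hence $[\hat R',\hat B']$ need not be translationally invariant, so the lemma is not directly applicable there. In short, the missing ingredient running through both stages is the one-local analogue of Lemma~\ref{lem:double_commutator}---a genuinely new lemma, not a corollary of what is in the main text---and without it the proposed proof does not close.
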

\begin{proof}
The proof that $\hat{B}$ must take the form of Eq.~\eqref{eq:form_of_boost} is given in Appendix~\ref{app:proof_GM2}. 
Accepting this fact, we may, without loss of generality, write $\hat{B}=\hat{B}_{\hat{H}}+\hat{R}$. 

Since Lemmas~\ref{lem:property_boost}, \ref{lem:key} (with $\hat{B}_{\hat{H}}$ replaced by $\hat{B}$), and~\ref{lem:double_commutator} remain valid, 
it suffices to show $\hat{Q}^{(n)}$ is translationally invariant. 
If $\hat{Q}^{(k)}$ is translationally invariant, then
\begin{align}
    \mathcal{T}(\hat{Q}^{(k+1)})-\hat{Q}^{(k+1)}
    &= [\mathcal{T}(\hat{B})-\hat{B},\hat{Q}^{(k)}]\nonumber\\
    &= [\hat{H}+\hat{X},\hat{Q}^{(k)}]
    = [\hat{X},\hat{Q}^{(k)}],
\end{align}
where $\hat{X}\coloneqq\mathcal{T}(\hat{R})-\hat{R}$. 
For $k=2$ this vanishes, proving translational invariance of $\hat{Q}^{(3)}$. 
For general $k$, the one-locality of $\hat{X}$ again ensures $[\hat{X},\hat{Q}^{(k)}]=0$ 
(see Proposition~\ref{prop:trans_inv} in Appendix~\ref{app:proof_GM2}), 
and hence $\hat{Q}^{(k+1)}$ is also translationally invariant. 
Thus, all $\hat{Q}^{(n)}$ are translationally invariant, and commutativity follows as in Theorem~\ref{thm:main}. 
\end{proof}

Theorem~\ref{thm:proof_GM2}, 
which does not assume any specific form of $\hat{B}$, 
provides a complete proof of the Grabowski--Mathieu conjecture (Conjecture 2 in Ref.~\cite{grabowskiIntegrabilityTestSpin1995}) for Hamiltonians satisfying assumptions~\eqref{eq:assumption_injective} and~\eqref{eq:assumption_1dim}: 
the existence of a boost generating a single nonzero local conserved quantity implies integrability. 
Eq.~\eqref{eq:form_of_boost} further shows that any boost coincides with the standard $\hat{B}_{\hat{H}}$ up to a one-local term. 

In contrast, in systems not satisfying assumption~\eqref{eq:assumption_1dim}, such as the XY model, there exists a boost operator different from the standard $\hat{B}_{\hat{H}}$, 
which in turn generates two independent families of local conserved quantities~\cite{grabowskiStructureConservationLaws1995}. 
This also follows from the following proposition, which generalizes Theorem~\ref{thm:main}. 
\begin{prop}\label{thm:abstract_boost}
Suppose that $\hat{Q}^{(3)}$ is a conserved quantity of $\hat{H}$ and that $\hat{X}=\mathcal{T}(\hat{B})-\hat{B}$ is translationally invariant. 
If 
\begin{equation}
    [\hat{X},[\hat{B},\hat{X}]]=0,
    \label{eq:abstract_boost}
\end{equation}
then all operators in $\{\hat{Q}^{(n)}\}_{n=2}^\infty$ are translationally invariant, mutually commuting, and commute with $\hat{X}$. 
\end{prop}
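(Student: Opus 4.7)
The strategy is a strong induction on $n$ that simultaneously establishes three properties: (i) $\hat{Q}^{(n)}$ is translationally invariant, (ii) $[\hat{H},\hat{Q}^{(n)}]=0$, and (iii) $[\hat{X},\hat{Q}^{(n)}]=0$. Once these are in hand, mutual commutativity of the $\hat{Q}^{(n)}$ will follow from the Jacobi-identity computation of Lemma~\ref{lem:property_boost}, whose proof nowhere exploits the specific form of $\hat{B}_{\hat{H}}$ and therefore applies verbatim to the abstract boost $\hat{B}$ with $\hat{Q}^{(n+1)}=[\hat{B},\hat{Q}^{(n)}]$.

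The base case $n=2$ reduces to (iii), i.e.\ $[\hat{X},\hat{H}]=0$. I will apply $\mathcal{T}$ to $[\hat{H},\hat{Q}^{(3)}]=0$ and use $\mathcal{T}(\hat{H})=\hat{H}$ together with $\mathcal{T}(\hat{Q}^{(3)})-\hat{Q}^{(3)}=[\hat{X},\hat{H}]$ (a direct consequence of $\mathcal{T}(\hat{B})=\hat{B}+\hat{X}$) to obtain $[\hat{H},[\hat{H},\hat{X}]]=0$. Since $\hat{X}$ is translationally invariant and at most two-local, Lemma~\ref{lem:double_commutator} yields $[\hat{H},\hat{X}]=0$; as a by-product $\hat{Q}^{(3)}$ is translationally invariant.

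For the inductive step, assume (i)--(iii) hold for all $m\leq n$. Translational invariance of $\hat{Q}^{(n+1)}$ is immediate from $\mathcal{T}(\hat{Q}^{(n+1)})-\hat{Q}^{(n+1)}=[\hat{X},\hat{Q}^{(n)}]$, which vanishes by the inductive hypothesis. Property (ii) is obtained exactly as in Lemma~\ref{lem:key}: Jacobi gives $[\hat{H},\hat{Q}^{(n+1)}]=-[\hat{Q}^{(3)},\hat{Q}^{(n)}]$, one further commutator with $\hat{H}$ annihilates the right-hand side by $[\hat{H},\hat{Q}^{(3)}]=[\hat{H},\hat{Q}^{(n)}]=0$, and Lemma~\ref{lem:double_commutator} concludes $[\hat{H},\hat{Q}^{(n+1)}]=0$.

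The crux is property (iii). Jacobi and $[\hat{X},\hat{Q}^{(n)}]=0$ yield $[\hat{X},\hat{Q}^{(n+1)}]=-[[\hat{B},\hat{X}],\hat{Q}^{(n)}]$; taking one more commutator with $\hat{X}$ and applying Jacobi once more, the assumption $[\hat{X},[\hat{B},\hat{X}]]=0$ annihilates one resulting term while the inductive hypothesis annihilates the other, so $[\hat{X},[\hat{X},\hat{Q}^{(n+1)}]]=0$. To conclude $[\hat{X},\hat{Q}^{(n+1)}]=0$ I will invoke a trivial generalization of Lemma~\ref{lem:double_commutator} in which $\hat{H}$ is replaced by any Hermitian, translationally invariant, local operator (the Hilbert--Schmidt proof carries over verbatim), applied to $\hat{X}$. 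The genuine obstacle is precisely to identify this generalization and to verify its hypotheses for $\hat{X}$: Hermiticity follows from the natural (and tacit throughout this section) Hermiticity of $\hat{B}$, while locality and translational invariance of $\hat{X}$ are manifest from $\hat{X}=\mathcal{T}(\hat{B})-\hat{B}$ and the assumption of the proposition. In the canonical case $\hat{B}=\hat{B}_{\hat{H}}$ one has $\hat{X}=\hat{H}$ and $[\hat{B},\hat{X}]=\hat{Q}^{(3)}$, so the hypothesis $[\hat{X},[\hat{B},\hat{X}]]=0$ reduces exactly to the Reshetikhin condition of Theorem~\ref{thm:main}, confirming that the proposition is indeed the correct abstract counterpart.
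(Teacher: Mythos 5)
Your proof is correct and follows essentially the same strategy as the paper's: the base case establishes $[\hat{H},\hat{X}]=0$ by applying $\mathcal{T}$ to the hypothesis and invoking Lemma~\ref{lem:double_commutator}, and the inductive step upgrades a vanishing double commutator to a vanishing single commutator via the same lemma (applied once with $\hat{H}$, once with $\hat{X}$), exactly as in the paper's generalization of Lemma~\ref{lem:key}. The paper stages the argument slightly differently — first $[\hat{H},\hat{X}]=0$, then $[\hat{X},\hat{Q}^{(3)}]=0$, then a reusable lemma asserting that if $[\hat{H},\hat{A}]=[\hat{X},\hat{A}]=0$ then $[\hat{H},[\hat{B},\hat{A}]]=[\hat{X},[\hat{B},\hat{A}]]=0$ — whereas you fold everything into one strong induction carrying (i) translational invariance, (ii) $[\hat{H},\hat{Q}^{(n)}]=0$, and (iii) $[\hat{X},\hat{Q}^{(n)}]=0$ simultaneously; the content is the same. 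You are also right that the one delicate ingredient, implicitly relied on by the paper as well, is that Lemma~\ref{lem:double_commutator} holds with $\hat{H}$ replaced by any Hermitian, translationally invariant, local operator (applied here to $\hat{X}$), since its Hilbert--Schmidt trace argument uses nothing else about $\hat{H}$; flagging the tacit Hermiticity of $\hat{B}$, and hence of $\hat{X}$, is exactly the right point to be explicit about.
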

The proof is essentially the same as that of Theorem~\ref{thm:main}, 
except that commutativity with $\hat{X}$ must also be shown (see Appendix~\ref{sec:abstract_boost}).

\subsection{Parametrized Hamiltonians}
Certain integrable models, such as the Hubbard model, have an $R$-matrix that cannot be expressed as a function of a single real variable. 
As a result, they do not satisfy the standard Reshetikhin condition and, 
moreover, they admit no boost operator~\cite{grabowskiStructureConservationLaws1995}. 
Instead, they are associated with families of parametrized 
\change{integrable}
Hamiltonians obeying a generalized Reshetikhin condition~\cite{linksLadderOperatorOnedimensional2001}. 
As a counterpart to Theorem~\ref{thm:main}, 
we show that if a parametrized Hamiltonian satisfies this generalized condition, 
then it is integrable.

We assume that the Hamiltonian takes the form
\begin{equation}
    \hat{H}(\lambda)=\sum_{i\in\Z} \hat{h}^{(2)}_{i}(\lambda)+\sum_{i\in\Z}\hat{h}^{(1)}_i(\lambda),
    \label{eq:Hamiltonian_param}
\end{equation}
where $\lambda$ belongs to some open subset $U\subset\R^D$. 
For \change{a smooth function} %
$v:U\to\R^D$, define
\begin{align}
    \hat{Q}^{(2)}_v(\lambda)&\coloneqq\hat{H}(\lambda),\\
    \hat{Q}^{(n+1)}_v(\lambda)&\coloneqq[\hat{B}_{\hat{H}(\lambda)},\hat{Q}^{(n)}_v(\lambda)]+(v\cdot\nabla)\hat{Q}^{(n)}_v(\lambda),
\end{align}
where $v\cdot\nabla=\sum_{a=1}^Dv_a\partial_{\lambda_a}$. 
In particular,
\begin{equation}
    \hat{Q}^{(3)}_v(\lambda)=-\sum_{i\in\Z}[\hat{h}_{i}(\lambda),\hat{h}_{i+1}(\lambda)]
    +(v\cdot\nabla)\hat{H}(\lambda),
\end{equation}
where $\hat{h}_i(\lambda)=\hat{h}^{(2)}_{i}(\lambda)+\hat{h}^{(1)}_{i}(\lambda)+\hat{c}^{(1)}_{i}(\lambda)-\hat{c}^{(1)}_{i+1}(\lambda)$
for some translationally invariant $1$-local operator 
$\hat{C}_v(\lambda)=\sum_{i\in\Z}\hat{c}^{(1)}_{v,i}(\lambda)$. 
The generalized Reshetikhin condition requires $[\hat{Q}^{(3)}_v(\lambda),\hat{H}(\lambda)]=0$ for all $\lambda\in U$, or its local form~\cite{linksLadderOperatorOnedimensional2001}. 

In parallel with the standard case (Theorem~\ref{thm:main}), the following holds. 
\begin{thm}\label{thm:generalized}
    If 
    \change{there exists a smooth function $v:U\to\R^D$ such that}
    $\hat{Q}^{(3)}_v(\lambda)$ is a conserved quantity of $\hat{H}(\lambda)$ for every $\lambda\in U$, 
    then $\hat{Q}^{(n)}_v(\lambda)$ is a translationally invariant conserved quantity of $\hat{H}(\lambda)$ for all $n\geq2$,
    and the operators in $\{\hat{Q}^{(n)}_v(\lambda)\}_{n=2}^\infty$ mutually commute.
\end{thm}
\change{We note that setting $v\equiv 0$ reproduces Theorem~\ref{thm:main} as a special case, 
and is consistent with the original Reshetikhin condition for $R$-matrices of difference form.
}

The proof parallels that of Theorem~\ref{thm:main}, 
using the fact that $\mathscr{B}_v\coloneqq[\hat{B}_{\hat{H}(\lambda)},-]+v\cdot\nabla$ satisfies the Jacobi identity:
\begin{equation}
    \mathscr{B}_v\big([\hat{A},\hat{C}]\big)(\lambda)
    =[\mathscr{B}_v(\hat{A})(\lambda),\hat{C}(\lambda)]
    +[\hat{A}(\lambda),\mathscr{B}_v(\hat{C})(\lambda)].
\end{equation}

\change{
Although the above theorem focuses on a single function $v$ satisfying the generalized Reshetikhin condition,
it may also happen that $\hat{Q}_w^{(3)}(\lambda)$ is a conserved quantity for another function $w:U\to\R^D$~\cite{takizawaLadderOperatorsIntegrable2003}}.  
In such cases, one can similarly show that $\hat{Q}_v^{(n)}(\lambda)$ and $\hat{Q}_w^{(m)}(\lambda)$ commute.
More strongly, the following holds.
\begin{prop}\label{prop:boost_and_Q}
Let $\hat{A}(\lambda)$ be a translationally invariant and local operator smoothly parametrized over $U\subset\R^D$.  
If both $\hat{Q}^{(3)}_v(\lambda)$ and $\hat{A}(\lambda)$ are conserved quantities of $\hat{H}(\lambda)$ for every $\lambda\in U$, then 
\[
[\hat{Q}^{(n)}_v(\lambda),\hat{A}(\lambda)]=0
\]
for all $n\geq 2$. 
\end{prop}
\begin{proof}
We prove a stronger statement:  
$(\mathscr{B}_v)^{n-2}(\hat{A})(\lambda)$ is translationally invariant and
\[
[\hat{Q}^{(n-k)}_v(\lambda),(\mathscr{B}_v)^k(\hat{A})(\lambda)]=0
\]
holds for all $0\leq k\leq n-2$.  
We establish this claim by induction on $n(\geq 2)$.  
Fix $\lambda\in U$ and suppress the argument for brevity.  

For $n=2$, the claim follows from the assumption.  
Suppose the claim holds for $n=n'-1$ with $n'\geq 3$, and prove it for $n=n'$.  

First, we show that $(\mathscr{B}_v)^{n'-2}(\hat{A})$ is translationally invariant.  
Since $(\mathscr{B}_v)^{n'-3}(\hat{A})$ is translationally invariant, we have
\begin{align}
\mathcal{T}((\mathscr{B}_v)^{n'-2}(\hat{A}))-(\mathscr{B}_v)^{n'-2}(\hat{A})
&=[\mathcal{T}(\hat{B}_{\hat{H}})-\hat{B}_{\hat{H}},\hat{A}]\nonumber\\
&=[\hat{H},\hat{A}]=0.
\end{align}

It remains to show $[\hat{Q}^{(n'-k)}_v,(\mathscr{B}_v)^k(\hat{A})]=0$
for $0\leq k\leq n'-2$.  
Using the Jacobi identity, we have
\begin{align}
    [\hat{Q}^{(n')}_v,\hat{A}]
    &=\mathscr{B}_v([\hat{Q}^{(n'-1)}_v,\hat{A}])
    -[\hat{Q}^{(n'-1)}_v,\mathscr{B}_v(\hat{A})]\nonumber\\
    &=-[\hat{Q}^{(n'-1)}_v,\mathscr{B}_v(\hat{A})]\nonumber\\
    &=-\mathscr{B}_v([\hat{Q}^{(n'-2)}_v,\mathscr{B}_v(\hat{A})])
    +[\hat{Q}^{(n'-1)}_v,\mathscr{B}_v(\hat{A})]\nonumber\\
    &=[\hat{Q}^{(n'-1)}_v,\mathscr{B}_v(\hat{A})]\nonumber\\
    &\dots\nonumber\\
    &=(-1)^{n'}[\hat{Q}^{(2)}_v,(\mathscr{B}_v)^{n'-2}(\hat{A})]\nonumber\\
    &=(-1)^{n'}[\hat{H},(\mathscr{B}_v)^{n'-2}(\hat{A})].
\end{align}
In particular, all commutators $[\hat{Q}^{(n'-k)}_v,(\mathscr{B}_v)^k(\hat{A})]$ are proportional to each other.  
Since Theorem~\ref{thm:generalized} guarantees $[\hat{H},\hat{Q}^{(n')}_v]=0$, we obtain
\begin{align}
    [\hat{H},[\hat{H},(\mathscr{B}_v)^{n'-2}(\hat{A})]]
    =(-1)^{n'}[\hat{H}, [\hat{Q}^{(n')}_v,\hat{A}]]=0.
\end{align}
By Lemma~\ref{lem:double_commutator}, this implies 
$[\hat{H},(\mathscr{B}_v)^{n'-2}(\hat{A})]=0$.  
Therefore, we conclude that $[\hat{Q}^{(n'-k)}_v,(\mathscr{B}_v)^k(\hat{A})]=0$ holds for all $0\leq k\leq n'-2$. 
\end{proof}

If $[\hat{H}(\lambda),\hat{Q}_w^{(3)}(\lambda)]=0$,  
then by Theorem~\ref{thm:generalized}, $\hat{Q}_w^{(m)}(\lambda)$ is a conserved quantity.  
Applying Proposition~\ref{prop:boost_and_Q} with $\hat{A}(\lambda)=\hat{Q}_w^{(m)}(\lambda)$ yields
\[
[\hat{Q}^{(n)}_v(\lambda),\hat{Q}_w^{(m)}(\lambda)]=0.
\]

\section{Conclusion and outlook}
In this paper, we have shown that 
a single conservation law generated by the boost operator gives rise to an infinite hierarchy of commuting local charges, or integrability. 
This result resolves a long-standing conjecture on quantum spin chains 
and provides a rigorous foundation for a heuristic sufficient criterion for integrability based on the Reshetikhin condition, independent of any additional algebraic assumptions. 
It also shows that the conservation-law structure of \change{translationally invariant} Yang--Baxter solvable models \change{with finite-range interactions}
is already encoded in the lowest nontrivial conserved quantity.

It is not known whether unconventional boosts that generate conserved quantities 
but do not satisfy Eq.~\eqref{eq:abstract_boost} 
also lead to integrability. 
Moreover, 
as our analysis has addressed only local conserved charges, 
the relation to Yang--Baxter solvability is yet to be clarified. 
In particular, whether one can reconstruct an $R$-matrix directly from a single conserved quantity 
remains a challenging problem. 
This problem is, in fact, equivalent to an old conjecture 
known as the tangential star–triangle hypothesis~\cite{jimboRemarksDifferentialApproach1984,idzumiSolvableNineteenvertexModels1994}.

Our results suggest the possibility of a sharp dichotomy between integrability and nonintegrability.
Very recently, Shiraishi and Yamaguchi demonstrated condition~\ref{cond:dichotomy} for $SU(2)$-symmetric quantum spin chains, 
and Proposition~\ref{prop:dichotomy} has been employed to establish a dichotomy theorem~\cite{shiraishiDichotomyTheoremSeparating2025}. 
Extending such results to more general classes of models 
will be crucial for understanding both the singular nature of integrability 
and the universality of nonintegrability.

\appendix
\section{Correspondence between finite and infinite systems}\label{sec:correspondence}
We summarize here the correspondence between finite and infinite chains.
Let $N\in\Z_{\geq 1}$.
For integers $a,b\in\mathbb{Z}$ with $a\leq b$, 
we write $\zinterval{a,b}\coloneqq\{c\in\Z\mid a\leq c\leq b\}$

\subsection{Lattice, state space, and operators in finite systems}
We consider a finite quantum chain of length $N$ subject to periodic boundary conditions. 
The set of sites is represented by $\zinterval{1,N}$.
For $i\in\mathbb{Z}$ we define the projection $\pi(i)\in\zinterval{1,N}$ by
\begin{equation}
    \pi_{[N]}(i)\coloneqq i-N\left(\left\lceil\frac{i}{N}\right\rceil-1\right)\in\zinterval{1,N}.
\end{equation}
Conversely, for $n\in\zinterval{1,N}$ we define a cut of the chain $s_{[N],n}$ by
\begin{equation}
    s_{[N],n}
    :\zinterval{1,N}\ni i
    \mapsto
    \begin{cases}
        i & (i\geq n)\\
        i+N & (i<n)
    \end{cases}
    \ \in\zinterval{n,n+N-1}.
\end{equation}
This serves as a local inverse of $\pi_{[N]}$:
\begin{align}
    \pi_{[N]}\circ s_{[N],n}(i)&=i,\\
    s_{[N],m}\circ \pi_{[N]}(j)&=
    \begin{cases}
        \vdots\\
        j+N, & (m-N\leq j<m),\\
        j, & (m\leq j<m+N),\\
        j-N, & (m+N\leq j<m+2N),\\
        \vdots
    \end{cases} \nonumber\\
    &=j+N\left\lfloor\frac{j-m}{N}\right\rfloor.
\end{align}
In what follows, 
we omit the explicit dependence on $N$ and 
simply write $\pi$ and $s_n$ for these maps.

Using $s_n$, the range of a subset $A\subset\zinterval{1,N}$ is defined as
\begin{equation}
    r_{[N]}(A)\coloneqq \min_{n\in\zinterval{1,N}}r(s_n(A)) \ \in\zinterval{0,N}.
\end{equation}
In general we have $r_{[N]}(A)\leq r(A)$.
Moreover, if $1\leq r_{[N]}(A)\leq (N+1)/2$, then there exists a unique $a\in A$ such that
\begin{equation}
    \pi(a+r_{[N]}(A)-1)\in A\subset\pi(\zinterval{a,a+r_{[N]}(A)-1}).
\end{equation}
This $a$ is called the \emph{left-most site} of $A$.
In particular, the following property holds:
\begin{equation}
    r(s_a(A))=r_{[N]}(A)\leq\frac{N+1}{2}<\frac{N+3}{2}\leq r(s_{a'}(A))
    \label{eq:property_left-most}
\end{equation}
for $a'\in A\setminus\{a\}$. 

For $i\in\zinterval{1,N}$, $A\subset\zinterval{1,N}$, and $x\in\mathbb{Z}$, 
we define addition modulo $N$ corresponding to translation by
\begin{align}
    i+_N x&\coloneqq \pi(i+x)\in\zinterval{1,N},\\
    A+_N x&\coloneqq \{a+_N x\mid a\in A\}\subset\zinterval{1,N}.
\end{align}

For $A\in\Zfinsubset$ and an injection $f:A\to A'\subset\mathbb{Z}$,
a unitary map $\tau(f):\Hilb^{\otimes A}\to\Hilb^{\otimes f(A)}$ is naturally defined.
In particular, $\tau_{[N]}(x)\coloneqq \tau(\bullet+_N x)$ represents the translation map on the finite chain.
This $\tau$ satisfies $\tau(\id_A)=I_A$ and $\tau(f\circ g)=\tau(f)\tau(g)$, and in particular, for $0\leq x\leq N-1$ we have
\begin{equation}
    \tau_{[N]}(x)=\tau(s_{x+1})^\dagger\tau_{\zinterval{1,N}}(x).
\end{equation}

Operators on this system are elements of $\bound^{\otimes\zinterval{1,N}}$.
For $X\in \bound^{\otimes\zinterval{1,N}}\setminus\{0\}$ we define the range of $X$ by
\begin{equation}
    r_{[N]}(X)\coloneqq\max\{r_{[N]}(A)\mid A\subset\zinterval{1,N},\, X_A\neq0_A\}
    ,
\end{equation}
with the convention $r_{[N]}(0)=0$.
Note that $r_{[N]}([X,Y])\leq \max\{r_{[N]}(X)+r_{[N]}(Y)-1,0\}$.

For $A\subset B\in\Zfinsubset$, $X\in\bound^{\otimes A}$, and an injection $f:B\to B'\subset\mathbb{Z}$, we have
\begin{equation}
    \tau(f)X\tau(f)^\dagger=\tau(f|_A)X\tau(f|_A)^\dagger.
    \label{eq:tau_property_1}
\end{equation}
Moreover, for $Y\in\bound^{\otimes B}$,
\begin{equation}
    \tau(f)Y_A\tau(f)^\dagger=(\tau(f)Y\tau(f)^\dagger)_{f(A)}.
    \label{eq:tau_property_2}
\end{equation}

\subsection{Locality of commutators}
From an operator defined on the infinite chain, we can construct its counterpart on a finite chain.

\begin{definition}
    Let $\Phi\in\transop$.
    Its finite-size version $\Phi_{[N]}\in\bound^{\otimes\zinterval{1,N}}$ is defined by
    \begin{equation}
        \Phi_{[N]}\coloneqq\Phi(\emptyset)+\sum_{x=0}^{N-1}\tau_{[N]}(x)\left[\sum_{1\in A_0\subset\zinterval{1,N}}
        \Phi(A_0)\right]
        \tau_{[N]}(x)^\dagger
        .
    \end{equation}
    This yields a translation-invariant operator on the finite chain with the periodic boundary condition.
\end{definition}
For $A\neq\emptyset$, the component $(\Phi_{[N]})_A$ can also be expressed as
\begin{equation}
    (\Phi_{[N]})_A=\sum_{a\in A}\tau(s_a)^\dagger\Phi(s_a(A))\tau(s_a).
    \label{eq:component_finite_op}
\end{equation}
This follows from the computation below:
\begin{widetext}
   \begin{align}
    \Phi_{[N]}&=\Phi(\emptyset)+\sum_{a=1}^{N}\tau(s_{a})^\dagger
    \tau((\bullet+a-1)|_{\zinterval{1,N}})\left[\sum_{1\in A_0\subset\zinterval{1,N}}\Phi(A_0)\right]
    \tau((\bullet+a-1)|_{\zinterval{1,N}})^\dagger\tau(s_{a})\nonumber\\
    &=\Phi(\emptyset)+\sum_{a=1}^{N}
    \tau(s_{a})^\dagger
    \left[\sum_{1\in A_0\subset\zinterval{1,N}}\Phi(A_0+a-1)\right]\tau(s_{a})\nonumber\\
    &=\Phi(\emptyset)+\sum_{a=1}^{N}
    \tau(s_{a})^\dagger
    \left[\sum_{a\in A'\subset\zinterval{a,a+N-1}}
    \Phi(A')\right]\tau(s_{a})\nonumber\\
    &=\Phi(\emptyset)+\sum_{a=1}^{N}
    \tau(s_{a})^\dagger
    \left[\sum_{a\in A\subset\zinterval{1,N}}\Phi(s_a(A))\right]\tau(s_{a})\nonumber\\
    &=\Phi(\emptyset)+\sum_{\emptyset\neq A\subset\zinterval{1,N}}\sum_{a\in A}\tau(s_a)^\dagger\Phi(s_a(A))\tau(s_a).
\end{align} 
\end{widetext}

In general, one has $r_{[N]}(\Phi_{[N]})\leq r(\Phi)$.  
In particular, if $r(\Phi)\leq (N/2)+1$, 
for $A\subset\zinterval{1,N}$ with $1\leq r_{[N]}(A)\leq (N+1)/2$ we have
\begin{equation}
    (\Phi_{[N]})_A=\tau(s_a)^\dagger\Phi(s_a(A))\tau(s_a),
    \label{eq:left-most_expansion}
\end{equation}
where $a\in A$ is the left-most site of $A$.
This is a consequence of Eq.~\eqref{eq:property_left-most}.

The following theorem states that, for sufficiently large $N$, the map to finite chains preserves the commutator structure.

\begin{thm}\label{thm:locality_of_com}
    Let $\Phi,\Psi\in\localop\cap\transop$ with $r(\Phi),r(\Psi)\leq (N+1)/2$.
    Then
    \begin{equation}
        [\Phi,\Psi]_{[N]}=[\Phi_{[N]},\Psi_{[N]}].
    \end{equation}
\end{thm}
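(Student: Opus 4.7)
The plan is to prove $[\Phi,\Psi]_{[N]}=[\Phi_{[N]},\Psi_{[N]}]$ componentwise in the decomposition $\bound^{\otimes\{1,\dots,N\}}=\bigoplus_A\zbound^A$, matching the $A$-components of the two sides through a term-by-term correspondence mediated by the cyclic projection $\pi:\Z\to\{1,\dots,N\}$.

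First, I would expand both sides bilinearly. On the finite-chain side, Eq.~\eqref{eq:left-most_expansion} (valid under $r(\Phi),r(\Psi)\leq(N+1)/2$) lets me write $\Phi_{[N]}$ as a sum of $G_{a(A_1)}(\Phi(s_{a(A_1)}(A_1)))$ over $A_1\subset\{1,\dots,N\}$ with $r_{[N]}(A_1)\leq(N+1)/2$, and likewise for $\Psi_{[N]}$; components of larger range vanish. The $A$-component of $[\Phi_{[N]},\Psi_{[N]}]$ then becomes a finite sum indexed by pairs $(A_1,A_2)$ with $A_1\cap A_2\neq\emptyset$ and $A_1\triangle A_2\subset A\subset A_1\cup A_2$. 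On the infinite-chain side, Eq.~\eqref{eq:component_finite_op} gives $([\Phi,\Psi]_{[N]})_A=\sum_{a\in A}G_a([\Phi,\Psi](s_a(A)))$, and each $[\Phi,\Psi](s_a(A))$ further expands as $\sum_{A_1',A_2'\in\Zfinsubset}([\Phi(A_1'),\Psi(A_2')])_{s_a(A)}$ subject to analogous support constraints.

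Second, I would establish the key correspondence via $\pi$. The range bound $r(\Phi)+r(\Psi)\leq N+1$ combined with $A_1'\cap A_2'\neq\emptyset$ forces $r(A_1'\cup A_2')\leq r(A_1')+r(A_2')-1\leq N$, so $\pi$ is injective on $A_1'\cup A_2'$ for every nonvanishing summand. Setting $(A_1,A_2):=(\pi(A_1'),\pi(A_2'))$ yields a valid pair on the finite side with $r_{[N]}(A_i)=r(A_i')$ and matching support relations. Conversely, all $\Z$-lifts of a given finite-side pair differ by a common shift by a multiple of $N$, and by translation invariance of $\Phi,\Psi$ they produce equal contributions under $G$. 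The outer summation $\sum_{a\in A}$ on the infinite side absorbs the freedom of the base site for the lift, and combined with the intertwining identity
\[
G_a\bigl(([\Phi(A_1'),\Psi(A_2')])_{s_a(A)}\bigr)=\bigl([G_{a(A_1)}(\Phi(s_{a(A_1)}(A_1))),G_{a(A_2)}(\Psi(s_{a(A_2)}(A_2)))]\bigr)_A,
\]
yields equality of the two expansions term by term.

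The hard part will be verifying this intertwining identity and tracking the redundancies: the outer index $a\in A$ on the infinite side interacts nontrivially with the implicit base sites $a(A_1),a(A_2)$ chosen by the left-most rule on the finite side, and reconciling them requires juggling three cyclic shifts $s_a,s_{a(A_1)},s_{a(A_2)}$ whose mutual differences must be absorbed by translation invariance of $\Phi$ and $\Psi$. The range hypothesis is used critically in the form $r(\Phi)+r(\Psi)\leq N+1$; any weaker bound would allow $\pi$ to identify two distinct sites of $A_1'\cup A_2'$, breaking the correspondence and producing spurious wrap-around contributions.
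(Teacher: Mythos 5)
Your proposal follows the same route as the paper's proof: expand both sides bilinearly over pairs of supports, use the cyclic projection $\pi$ to match infinite-chain pairs with finite-chain pairs, and invoke the range hypothesis and translation invariance to identify the resulting conjugated commutators. The ``hard part'' you flag at the end is, however, the bulk of the actual proof. Two remarks on where your bookkeeping differs from the paper's and where it would need repair. First, the paper does not absorb a multiplicity of lifts through the outer sum $\sum_{a\in A}$. Rather, it shows that for each constrained pair $(C,C')$ there is a \emph{unique} $a\in A$ with $C\triangle C'\subsetneq s_a(A)\subset C\cup C'$ (because $s_a(A)\subset C\cup C'$ has range $\leq N$, so changing $a$ would push a point of $s_a(A)$ outside $C\cup C'$); the double sum over $(a,C,C')$ thus collapses to a single sum over pairs, and the map $(C,C')\mapsto(\pi C,\pi C')$ is then shown to be a genuine bijection between two explicitly defined index sets, not a surjection whose fibers are counted by $a$. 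Second, the inverse of that bijection is the genuinely delicate step: given a finite pair $(B,B')$ whose left-most sites $b$ and $b'$ generally differ, one needs a single cut $s_{b_0}$ that simultaneously straightens both $B$ and $B'$ without changing their ranges, and this requires the combinatorial lemma (the paper's Lemma on coherent cuts) exploiting the bound $r_{[N]}(B),r_{[N]}(B')\leq(N+1)/2$ and the nonempty intersection. The individual bounds are needed here, not just the sum bound $r(\Phi)+r(\Psi)\leq N+1$ you emphasize; your justification of the hypothesis is therefore slightly off. The ``intertwining identity'' you state is correct in spirit and corresponds to the paper's chain of conjugation identities via $\tau(\pi|_{C\cup C'})$, but proving it relies precisely on the coherent-cut lemma, so the sketch as written leaves the central argument open.
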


To prove this theorem, 
we show the following two lemmas.
\begin{lem}\label{lem:suppl_1}
    Let $A\in\Zfinsubset$ with $r(A)\leq N$, and write $a=\min A$. Then
    \begin{equation}
        s_{\pi(a)}(\pi(A))=A+\pi(a)-a.
    \end{equation}
\end{lem}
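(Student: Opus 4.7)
The plan is to reduce the claim to a statement about uniqueness of representatives modulo $N$ in a window of $N$ consecutive integers. First I would record two elementary facts that follow directly from the definitions of $\pi$ and $s_n$: for any $i\in\Z$, $\pi(i)\equiv i \pmod N$ and $\pi(i)\in\{1,\dots,N\}$; and for any $j\in\{1,\dots,N\}$, $s_n(j)$ is the unique element of $\{n,\dots,n+N-1\}$ congruent to $j$ modulo $N$. These are immediate from the formulas $\pi(i)=i-N(\lceil i/N\rceil-1)$ and $s_n(j)\in\{j,j+N\}$.

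Next, for each $j\in A$, write $j=a+k$ with $0\leq k\leq r(A)-1\leq N-1$. Applying $\pi$ gives an element $\pi(j)\in\{1,\dots,N\}$ with $\pi(j)\equiv j\pmod N$, and by the second fact above, $s_{\pi(a)}(\pi(j))$ is the unique representative of the residue class $j\bmod N$ that lies inside $\{\pi(a),\dots,\pi(a)+N-1\}$. On the other hand, the candidate value $j+\pi(a)-a=k+\pi(a)$ lies in the same window $\{\pi(a),\dots,\pi(a)+N-1\}$ because $0\leq k\leq N-1$, and it is congruent to $a+k=j$ modulo $N$ since $\pi(a)\equiv a\pmod N$. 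Uniqueness of representatives then forces $s_{\pi(a)}(\pi(j))=j+\pi(a)-a$ for every $j\in A$, which is the set equality in the statement.

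There is no real obstacle here; this lemma is a bookkeeping statement and the only place where care is needed is the application of the hypothesis $r(A)\leq N$. It is used precisely once, to guarantee $k\leq N-1$ so that $k+\pi(a)$ actually falls in the window $\{\pi(a),\dots,\pi(a)+N-1\}$; without this, a point $j\in A$ with $j-a\geq N$ would wrap around under $\pi$ and the identity would fail by a positive multiple of $N$. I would also briefly mention that the case $A=\emptyset$ is trivial so that the assumption $a=\min A$ is meaningful.
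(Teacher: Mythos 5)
Your proof is correct and takes essentially the same approach as the paper's: both argue that $s_{\pi(a)}(\pi(A))$ and $A+\pi(a)-a$ live inside the window $\{\pi(a),\dots,\pi(a)+N-1\}$ (the latter containment is where $r(A)\leq N$ enters), agree with $A$ modulo $N$, and hence coincide by injectivity of $\pi$ (equivalently, uniqueness of residue representatives) on that window. The only difference is cosmetic: you verify the identity elementwise, while the paper checks it at the level of sets by applying $\pi$ to both sides.
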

\begin{proof}
    Both $s_{\pi(a)}(\pi(A))$ and $A+\pi(a)-a$ are contained in 
    $\zinterval{\pi(a),\pi(a)+N-1}$, 
    on which $\pi$ is injective.
    Hence it suffices to check
    \begin{equation}
        \pi(s_{\pi(a)}(\pi(A)))=\pi(A+\pi(a)-a),
    \end{equation}
    which equals $\pi(A)$ on both sides.
\end{proof}

\begin{lem}\label{lem:suppl_2}
    Suppose $B,B'\subset\zinterval{1,N}$ satisfy $r_{[N]}(B),r_{[N]}(B')\leq (N+1)/2$ and $B\cap B'\neq\emptyset$.
    Let $b,b'$ denote the left-most sites of $B,B'$, respectively.
    Then one of the following holds:
    \begin{align}
        s_b(B')&=s_{b'}(B')+nN\quad (\exists n\in\{0,1\}),\label{eq:suppl_1}\\
        s_{b'}(B)&=s_b(B)+mN\quad (\exists m\in\{0,1\}).\label{eq:suppl_2}
    \end{align}
\end{lem}
\begin{proof}
The case $b=b'$ is trivial, so assume $b\neq b'$.
First suppose $s_b(b')-b\le \frac{N+1}{2}-1$. 
Set $n\coloneqq \frac{s_b(b')-b'}{N}\in\{0,1\}$.
Then the assumption implies $b\le b'+nN\le b+\frac{N+1}{2}-1$. 
Hence we have 
\begin{align}
    s_{b'}(B')+nN&\subset \zinterval{b'+nN,
    b'+nN+\lfloor\tfrac{N+1}{2}\rfloor-1}\nonumber\\
    &\subset \zinterval{b,b+N-1}.
\end{align}
Since $\pi$ is injective on $\zinterval{b,b+N-1}$, 
it suffices to show
\begin{equation}
    \pi\left(s_b(B')\right)=\pi\left(s_{b'}(B')+nN\right),
\end{equation}
which indeed holds since both sides equal $B'$.
Therefore \eqref{eq:suppl_1} holds.  
Similarly, if $s_{b'}(b)\le b'+\frac{N+1}{2}-1$, then with
$m\coloneqq \frac{s_{b'}(b)-b}{N}\in\{0,1\}$
we obtain \eqref{eq:suppl_2}.

It remains to consider the case where both
$s_b(b')-b>\frac{N+1}{2}-1$ and 
$s_{b'}(b)-b'>\frac{N+1}{2}-1$ hold. 
Since these quantities are integers, we have
$s_b(b')-b,\ s_{b'}(b)-b'\ \ge \left\lfloor\frac{N+1}{2}\right\rfloor \ge \frac{N}{2}$.
Moreover, because $b\neq b'$, the cuts satisfy
\begin{equation}
    s_b(b')+s_{b'}(b)=b+b'+N.
\end{equation}
Consequently, $s_b(b')-b=s_{b'}(b)-b'=\frac{N}{2}$ holds, 
which is equivalent to $|b-b'|=\frac{N}{2}$. 
Consider the case $b'=b+\frac{N}{2}$ and take $c\in B\cap B'$. 
Then $b\leq c<b'$. However, we have
\begin{align}
   \frac{N+1}{2}&\geq r_{[N]}(B')=r(s_{b'}(B'))\nonumber\\
   &\geq c+N-b'+1\nonumber\\
   &\geq b+N-b'+1=\frac{N}{2}+1>\frac{N+1}{2},
\end{align}
which is a contradiction. 
The case $b=b'+\frac{N}{2}$ can be treated in the same way.

Therefore either $s_b(b')-b\leq \frac{N+1}{2}-1$ or $s_{b'}(b)-b'\leq \frac{N+1}{2}-1$ hold, 
and the lemma follows.
\end{proof}
\begin{proof}[Proof of Theorem~\ref{thm:locality_of_com}]
    Without loss of generality we may assume $r(\Phi),r(\Psi)\geq1$.
    It suffices to show that $([\Phi,\Psi]_{[N]})_A=[\Phi_{[N]},\Psi_{[N]}]_A$ holds for every $A\subset\zinterval{1,N}$.  
    The case $A=\emptyset$ is trivial since both sides vanish, so we assume $A\neq\emptyset$.

    \paragraph*{Step 1. Expansion of $([\Phi,\Psi]_{[N]})_A$.}
    First, let us expand $([\Phi,\Psi]_{[N]})_A$.  
    We abbreviate the relation $C\triangle C'\subsetneq A\subset C\cup C'$ by $(C,C')\triangleright A$.  
    From Eq.~\eqref{eq:component_finite_op} we obtain
    \begin{align}
    &([\Phi,\Psi]_{[N]})_A
    =\sum_{a\in A}\tau(s_a)^\dagger[\Phi,\Psi](s_a(A))\tau(s_a) \nonumber\\
    &=\sum_{a\in A}\sum_{\substack{C,C'\in\Zfinsubset\\r(C),r(C')\leq \frac{N+1}{2} \\(C,C')\triangleright s_a(A)}}
    \tau(s_a)^\dagger[\Phi(C),\Psi(C')]_{s_a(A)}\tau(s_a).
    \end{align}
    For each pair $(C,C')$ in the sum, let $c_0=\min (C\cup C')$.  
    Since $C\cap C'\neq\emptyset$ and $r(C),r(C')\leq (N+1)/2$, we have
    \begin{equation}
        s_a(A)\subset C\cup C'\subset \zinterval{c_0,c_0+N-1}.
    \end{equation}
    Hence $\pi|_{C\cup C'}$ is injective, and by Eqs.~\eqref{eq:tau_property_1}, \eqref{eq:tau_property_2} we have
    \begin{align}
        &\tau(s_a)^\dagger[\Phi(C),\Psi(C')]_{s_a(A)}\tau(s_a)\nonumber\\
        &=\tau(s_a^{-1}|_{s_a(A)})[\Phi(C),\Psi(C')]_{s_a(A)}\tau(s_a^{-1}|_{s_a(A)})^\dagger\nonumber\\
        &=\tau(\pi|_{s_a(A)})[\Phi(C),\Psi(C')]_{s_a(A)}\tau(\pi|_{s_a(A)})^\dagger\nonumber\\
        &=\tau(\pi|_{C\cup C'})[\Phi(C),\Psi(C')]_{s_a(A)}\tau(\pi|_{C\cup C'})^\dagger\nonumber\\
        &=(\tau(\pi|_{C\cup C'})[\Phi(C),\Psi(C')]\tau(\pi|_{C\cup C'})^\dagger)_A\nonumber\\
        &=([\tau(\pi|_{C})\Phi(C)\tau(\pi|_{C})^\dagger,\tau(\pi|_{C'})\Psi(C')\tau(\pi|_{C'})^\dagger])_A.
    \end{align}
    Moreover, for each $(C,C')$ the element $a\in A$ satisfying $(C,C')\triangleright s_a(A)$ is unique. 
    Indeed, if $a'>a$, then $a+N\in s_{a'}(A)$, but since $a\geq c_0$, one has $a+N>c_0+N-1$ and hence $a+N\notin C\cup C'$.  
    The case $a'<a$ is similar.  
    Explicitly, the unique element is given by $a=\min[(C\cup C')\cap\pi^{-1}(A)]$.
    Therefore, we arrive at
    \begin{widetext}
        \begin{equation}
    ([\Phi,\Psi]_{[N]})_A
    =\sum_{\substack{C,C'\in\Zfinsubset\\r(C),r(C')\leq \frac{N+1}{2} \\\exists a\in A,\ (C,C')\triangleright s_a(A)}}
    ([\tau(\pi|_{C})\Phi(C)\tau(\pi|_{C})^\dagger,\tau(\pi|_{C'})\Psi(C')\tau(\pi|_{C'})^\dagger])_A.
    \label{eq:finite_com_1}
    \end{equation}
    \end{widetext}

    \paragraph*{Step 2. Construction of a bijection.}
    To compare $([\Phi,\Psi]_{[N]})_A$ with $[\Phi_{[N]},\Psi_{[N]}]_A$, we introduce
    \begin{align}
        \mathcal{C}&\coloneqq\{(C,C')\in\Zfinsubset^2\mid r(C),r(C')\leq \tfrac{N+1}{2},\nonumber\\ &\quad\quad\quad\quad\quad(C,C')\triangleright s_a(A)\ \text{for some }a\in A\},\\
        \mathcal{C}_{[N]}&\coloneqq\{(B,B')\in(2^{\zinterval{1,N}})^2\mid \nonumber\\ &\quad\quad\ \ r_{[N]}(B),r_{[N]}(B')\leq \tfrac{N+1}{2},(B,B')\triangleright A\}.
    \end{align}
    We now construct a bijection between these two sets.  

    First, given $(C,C')\in\mathcal{C}$, we show that $(\pi(C),\pi(C'))\in\mathcal{C}_{[N]}$.  
    Clearly $\pi(C)\subset\zinterval{1,N}$.  
    Writing $c=\min C$, Lemma~\ref{lem:suppl_1} yields
    \begin{align}
        r_{[N]}(\pi(C))&\leq r(s_{\pi(c)}(\pi(C)))=r(C+\pi(c)-c)=r(C)\nonumber\\
        &\leq \tfrac{N+1}{2}.
    \end{align}
    Since $\pi$ is injective on $C\cup C'$, we have $\pi(C\cup C')=\pi(C)\cup\pi(C')$ and $\pi(C\cap C')=\pi(C)\cap\pi(C')$.  
    Hence $(\pi(C),\pi(C'))\triangleright \pi(s_a(A))=A$.
    Therefore, $(\pi(C),\pi(C'))\in\mathcal{C}_{[N]}$ holds.  

    Conversely, for $(B,B')\in\mathcal{C}_{[N]}$, define $(C_{B,B'},C'_{B,B'})\in\Zfinsubset^2$ by
    \begin{align}
        C_{B,B'}^{(')}&\coloneqq s_{b_0}(B^{(')})-n_{b_0}N,\\
        n_{b_0}&\coloneqq
        \begin{cases}
            0 & (\min s_{b_0}(A)\leq N),\\
            1 & (N<\min s_{b_0}(A)),
        \end{cases}
    \end{align}
    where, denoting by $b,b'$ the left-most sites of $B,B'$, we set
    \begin{equation}
    b_0=
    \begin{cases}
        b & (s_b(B')=s_{b'}(B')+nN\ \mbox{for some }n\in\{0,1\}),\\
        b' & \mbox{(otherwise)}.
    \end{cases}
    \end{equation}
    By Lemma~\ref{lem:suppl_2}, we have
    \begin{align}
        s_{b_0}(B')&=s_{b'}(B')+nN\quad (\exists n\in\{0,1\}), \label{eq:simultaneous_cut_1}\\
        s_{b_0}(B)&=s_{b}(B)+mN\quad (\exists m\in\{0,1\}). \label{eq:simultaneous_cut_2}
    \end{align}
    
    We now show $(C_{B,B'},C'_{B,B'})\in\mathcal{C}$.  
    From Eqs.~\eqref{eq:simultaneous_cut_1}, \eqref{eq:simultaneous_cut_2} we obtain
    \begin{equation}
        r(C^{(')}_{B,B'})=r(s_{b^{(')}}(B^{(')}))=r_{[N]}(B^{(')})\leq\tfrac{N+1}{2}.
    \end{equation}
    Next, we prove that $(C_{B,B'},C'_{B,B'})\triangleright s_a(A)$ for some $a\in A$.  
    Since $s_{b_0}$ and translations are injective, we have $(C_{B,B'},C'_{B,B'})\triangleright s_{b_0}(A)-n_{b_0}N$.  
    Let $a=\pi(\min s_{b_0}(A))\in A$.  
    It suffices to show $s_{b_0}(A)-n_{b_0}N=s_a(A)$.  
    Take any $a'\in A$. 
    We claim
    \begin{equation}
        a\leq s_{b_0}(a')-n_{b_0}N\leq a+N-1.
    \end{equation}
    Note that $\pi$ is injective on 
    $s_{b_0}(A)\subset\zinterval{b_0,b_0+N-1}$, 
    hence $s_{b_0}(a)=\min s_{b_0}(A)$.  
    If $n_{b_0}=0$, then $s_{b_0}(a)\leq N$ implies $a\geq b_0$.  
    Thus $a\leq s_{b_0}(a')\leq b_0+N-1\leq a+N-1$.
    If $n_{b_0}=1$, then $a<b_0$ and $s_{b_0}(a)=a+N$.  
    Hence
    \begin{align}
    a+N&\leq s_{b_0}(a)=\min s_{b_0}(A)\nonumber\\
    &\leq s_{b_0}(a')\nonumber\\
    &\leq b_0+N-1\leq 2N-1<a+2N-1,
    \end{align}
    which implies $a\leq s_{b_0}(a')-N\leq a+N-1$.  
    Moreover, we have
    \begin{equation}
        \pi\bigl(s_{b_0}(A) - n_{b_0}N\bigr)=A=\pi\bigl(s_a(A)\bigr).
    \end{equation}
    Since $\pi$ is injective on $\zinterval{a,a+N-1}$, 
    we conclude that $s_{b_0}(A)-n_{b_0}N=s_a(A)$, 
    and thus $(C_{B,B'},C'_{B,B'})\in\mathcal{C}$.

    Finally, we show that the two maps are inverses of each other.  
    Clearly $\pi(C^{(')}_{B,B'})=B^{(')}$, so it suffices to show injectivity of $(C,C')\mapsto(\pi(C),\pi(C'))$.  
    Suppose $(\pi(C_1),\pi(C'_1))=(\pi(C_2),\pi(C'_2))=(B,B')$. 
    Note that for $C$ with $r(C)\leq \tfrac{N+1}{2}$, $\pi(\min C)$ is the left-most site of $\pi(C)$.  
    Therefore, if $b$ is the left-most site of $B$, 
    then Lemma~\ref{lem:suppl_1} gives 
    \begin{align}
        C_i&=s_b(B)+n_iN,\label{eq:form_of_Ci}\\
        n_i&=(\min C_i-b)/N\in\Z.
    \end{align}
    Similarly, if $b'$ is the left-most site of $B'$, 
    we have 
    \begin{equation}
        C'_i=s_{b'}(B')+n'_iN.\label{eq:form_of_C'i}
    \end{equation}
    From Eqs.~\eqref{eq:simultaneous_cut_1}, \eqref{eq:simultaneous_cut_2}, 
    the only possible $k\in\Z$ satisfying
    $s_b(B)\cap (s_{b'}(B')+mN)\neq\emptyset$ is $k=n-m$.  
    Since $C_i\cap C'_i\neq\emptyset$, 
    this forces 
    \begin{equation}
        n'_i=n_i+n-m.
    \end{equation}
    Consequently, $C_2\cup C'_2=(C_1\cup C'_1)+(n_2-n_1)N$.  
    Since $r(C_1\cup C'_1)\leq N$, 
    we must have either $n_1=n_2$ or $(C_1\cup C'_1)\cap (C_2\cup C'_2)=\emptyset$.  
    However, for $a_i\in A$ with $s_{a_i}(A)\subset C_i\cup C'_i$, 
    we have 
    \begin{align}
        \max\{a_1,a_2\}&\in s_{a_1}(A)\cap s_{a_2}(A)\nonumber\\
        &\subset (C_1\cup C'_1)\cap (C_2\cup C'_2).
    \end{align}
    Thus $(C_1\cup C'_1)\cap (C_2\cup C'_2)\neq\emptyset$, 
    forcing $n_1=n_2$.  
    Hence $(C_1,C'_1)=(C_2,C'_2)$, so the map is injective.

    \paragraph*{Step 3. Identification with the commutator on the finite chain.}
    Changing the summation variables in Eq.~\eqref{eq:finite_com_1} via this bijection, we obtain
    \begin{widetext}
     \begin{align}
    ([\Phi,\Psi]_{[N]})_A
    &=\sum_{(B,B')\in\mathcal{C}_{[N]}}
    ([\tau(\pi|_{C_{B,B'}})\Phi(C_{B,B'})\tau(\pi|_{C_{B,B'}})^\dagger,\,
      \tau(\pi|_{C'_{B,B'}})\Psi(C'_{B,B'})\tau(\pi|_{C'_{B,B'}})^\dagger])_A.
    \end{align}    
    By construction, for some $k\in\Z$ one has $C_{B,B'}=s_b(B)+kN$, 
    and hence
    \begin{align}
    \tau(\pi|_{C_{B,B'}})\Phi(C_{B,B'})\tau(\pi|_{C_{B,B'}})^\dagger
    &=\tau(\pi|_{s_b(B)+kN})\tau_{s_b(B)}(kN)\Phi(s_b(B))\tau_{s_b(B)}(kN)^\dagger\tau(\pi|_{s_b(B)+kN})^\dagger\nonumber\\
    &=\tau(\pi|_{s_b(B)})\Phi(s_b(B))\tau(\pi|_{s_b(B)})^\dagger.
    \end{align}
    Here, in the first equality we used the translation invariance of $\Phi$, and in the second equality we applied the following identity:
    \begin{equation}
        \tau(\pi|_{A+kN})\tau_A(kN)=\tau(\pi|_{A})\quad (k\in\Z).
    \end{equation}
    Since the same reasoning applies to $C'_{B,B'}$, we obtain 
    \begin{align}
    ([\Phi,\Psi]_{[N]})_A
    &=\sum_{(B,B')\in\mathcal{C}_{[N]}}
    ([\tau(\pi|_{s_{b}(B)})\Phi(s_{b}(B))\tau(\pi|_{s_{b}(B)})^\dagger,\tau(\pi|_{s_{b'}(B')})\Psi(s_{b'}(B'))\tau(\pi|_{s_{b'}(B')})^\dagger])_A\nonumber\\
     &=\sum_{(B,B')\in\mathcal{C}_{[N]}}
    ([\tau(s_b)^\dagger\Phi(s_{b}(B))\tau(s_b),\tau(s_{b'})^\dagger\Psi(s_{b'}(B'))\tau(s_{b'})])_A\nonumber\\
    &=\sum_{(B,B')\in\mathcal{C}_{[N]}}
    ([(\Phi_{[N]})_B,(\Psi_{[N]})_{B'}])_A\nonumber\\
    &=\sum_{B,B'\subset\zinterval{1,N}}
    ([(\Phi_{[N]})_B,(\Psi_{[N]})_{B'}])_A=[\Phi_{[N]},\Psi_{[N]}]_A,
    \end{align} 
    \end{widetext}
    which establishes the desired identity.
    Here we used Eq.~\eqref{eq:left-most_expansion} in the third equality.  
\end{proof}

Using Theorem~\ref{thm:locality_of_com}, 
we can pass back and forth between the commutativity of operators in the infinite and finite systems.

\begin{cor}\label{cor:locality_of_conservation}
    Let $\Phi,\Psi\in\localop\cap\transop$ with $r(\Phi),r(\Psi)\leq (N+1)/2$.
    Then $[\Phi,\Psi]=0$ implies $[\Phi_{[N]},\Psi_{[N]}]=0$.  
    Conversely, if in addition $r(\Phi)+r(\Psi)\leq (N/2)+2$, 
    then $[\Phi_{[N]},\Psi_{[N]}]=0$ implies $[\Phi,\Psi]=0$.
\end{cor}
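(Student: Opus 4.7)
The plan is to derive both directions from Theorem~\ref{thm:locality_of_com}, which under the shared hypothesis $r(\Phi),r(\Psi)\le(N+1)/2$ identifies $[\Phi_{[N]},\Psi_{[N]}]$ with $[\Phi,\Psi]_{[N]}$ as an operator on the finite chain.

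The forward direction is immediate: if $[\Phi,\Psi]=0$, then $[\Phi,\Psi]_{[N]}=0$ trivially, and Theorem~\ref{thm:locality_of_com} yields $[\Phi_{[N]},\Psi_{[N]}]=0$.

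For the converse I would first apply Theorem~\ref{thm:locality_of_com} to convert the hypothesis $[\Phi_{[N]},\Psi_{[N]}]=0$ into $[\Phi,\Psi]_{[N]}=0$, and then prove the auxiliary injectivity statement that, for $\Xi\in\localop\cap\transop$ with $r(\Xi)\le N/2+1$, vanishing of $\Xi_{[N]}$ forces $\Xi=0$. Applied to $\Xi:=[\Phi,\Psi]$, whose range is bounded by $r(\Phi)+r(\Psi)-1\le N/2+1$, this gives $[\Phi,\Psi]=0$. To prove the injectivity, by translation invariance of $\Xi$ it suffices to show $\Xi(C)=0$ for every $C\in\Zfinsubset$ with $\min C=1$ and $r(C)\le r(\Xi)$, since such $C$ embeds in $\{1,\dots,N\}$. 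Expanding via Eq.~\eqref{eq:component_finite_op},
\[
0=(\Xi_{[N]})_C=\Xi(C)+\sum_{c\in C,\,c>1}\tau(s_c)^{\dagger}\,\Xi(s_c(C))\,\tau(s_c),
\]
and estimating $r(s_c(C))=N+1-(c-c')\ge N+2-r(C)\ge N/2+1\ge r(\Xi)$ (with $c'$ the immediate predecessor of $c$ in $C$), one kills each tail summand by the range bound on $\Xi$ whenever this inequality is strict and concludes $\Xi(C)=0$.

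The main obstacle is the boundary configuration where the estimate is saturated: $r(\Xi)=N/2+1$, $|C|=2$, and $c_2-c_1=r(\Xi)-1$ (for $|C|\ge 3$ each gap is strictly less than $r(C)-1$, so the inequality is automatically strict). A pure range argument fails here, and I would exploit the commutator origin of $\Xi=[\Phi,\Psi]$: for any two-element set $D$, a nontrivial contribution to $[\Phi,\Psi](D)$ requires $C_1\triangle C_1'\subsetneq D\subset C_1\cup C_1'$, which forces either $C_1=C_1'\supset D$ or one of $C_1,C_1'$ to contain both elements of $D$; in either case the containing set has range at least $r(D)=r(\Phi)+r(\Psi)-1$, contradicting $r(C_1)\le r(\Phi)$ and $r(C_1')\le r(\Psi)$ as soon as $\min(r(\Phi),r(\Psi))\ge 2$. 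Hence $\Xi(C)=\Xi(s_c(C))=0$ automatically in that subcase. In the complementary subcase $\min(r(\Phi),r(\Psi))=1$, the joint hypotheses $r(\Phi),r(\Psi)\le(N+1)/2$ and $r(\Phi)+r(\Psi)\le N/2+2$ give $r(\Xi)\le\max(r(\Phi),r(\Psi))\le(N+1)/2<N/2+1$, so the saturating configuration does not arise. Combining these observations yields $\Xi(C)=0$ in every situation and closes the converse.
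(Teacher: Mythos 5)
Your proposal is correct and follows essentially the same strategy as the paper: reduce via Theorem~\ref{thm:locality_of_com} to $[\Phi,\Psi]_{[N]}=0$, expand $([\Phi,\Psi]_{[N]})_C$ with Eq.~\eqref{eq:component_finite_op}, kill the cyclic tail terms $[\Phi,\Psi](s_c(C))$ by range-counting, and handle the single boundary case ($N$ even, $r(C)=r([\Phi,\Psi])=N/2+1$, $|C|=2$) by the support structure of the commutator, exactly as the paper does when it argues that a contributing configuration forces $|A|\ge 3$. One small simplification you could make: the subcase split on $\min(r(\Phi),r(\Psi))$ is unnecessary, since for $N$ even the hypothesis $r(\Phi),r(\Psi)\le(N+1)/2$ already forces the integer bound $r(\Phi),r(\Psi)\le N/2<N/2+1=r(D)$, so no two-element $D$ of range $N/2+1$ can be contained in any $C_1$ or $C_1'$ contributing to $[\Phi,\Psi](D)$, and the contradiction is unconditional.
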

\begin{proof}
    The first part follows immediately from Theorem~\ref{thm:locality_of_com}, since $[\Phi,\Psi]=0$ implies $[\Phi,\Psi]_{[N]}=0$.  
    To prove the second part, assume that $r(\Phi)+r(\Psi)\leq \tfrac{N}{2}+2$ and that $[\Phi_{[N]},\Psi_{[N]}]=0$.  
    In this case we have $r([\Phi,\Psi])\leq \tfrac{N}{2}+1$, and thus it suffices to show that $[\Phi,\Psi](A)=0$ for every $A\in\Zfinsubset$ with $r(A)\leq \tfrac{N}{2}+1$.  
    By Theorem~\ref{thm:locality_of_com}, $[\Phi,\Psi]_{[N]}=0$.  
    From Eq.~\eqref{eq:left-most_expansion} together with Lemma~\ref{lem:suppl_1}, if $r(A)\leq \tfrac{N+1}{2}$, then setting $a=\min A$ we obtain
    \begin{widetext}
    \begin{align}
        [\Phi,\Psi](A)
        &=\tau_A(\pi(a)-a)^\dagger[\Phi,\Psi](A+\pi(a)-a)\tau_A(\pi(a)-a)\nonumber\\
        &=\tau_A(\pi(a)-a)^\dagger\tau(s_{\pi(a)})([\Phi,\Psi]_{[N]})_{\pi(A)}\tau(s_{\pi(a)})^\dagger\tau_A(\pi(a)-a)=0.
    \end{align}   
    \end{widetext}

    We now assume that $N$ is even and consider the case $r(A)=\tfrac{N}{2}+1$.  
    Without loss of generality we may take $r(\Phi)+r(\Psi)=\tfrac{N}{2}+2$.  
    In this situation $r(\Phi),r(\Psi)\geq2$.  
    If $r(C)\leq r(\Phi)$ and $r(C')\leq r(\Psi)$ with $C\triangle C'\subsetneq A \subset C\cup C'$, then it must be the case that $r(C)=r(\Phi)$, $r(C')=r(\Psi)$, $A=C\cup C'$, and either $\max C=\min C'$ or $\min C=\max C'$.  
    In either situation we have $|A|\geq3$.

    We claim that if $r(A)=\tfrac{N}{2}+1$ and $|A|\geq3$, 
    then $r(s_{a'}(\pi(A)))>\tfrac{N}{2}+1$ for every $a'\in\pi(A)\setminus\{\pi(\min A)\}$.
    To see this, we may assume $A=\zinterval{1,\tfrac{N}{2}+1}\subset\zinterval{1,N}$ without loss of generality.  
    By the assumption $|A|\geq 3$, 
    there exists $a_*\in A$ with $1<a_*<\tfrac{N}{2}+1$.  
    Let $a'\in A\setminus\{1\}$.  
    If $2\leq a'\leq a_*$, then $s_{a'}(\pi(A))\supset\{a_*,s_{a'}(1)=N+1\}$, and hence
    \begin{equation}
        r(s_{a'}(\pi(A)))\geq N+2-a_*> \tfrac{N}{2}+1.
    \end{equation}
    If $a_*<a'\leq \tfrac{N}{2}+1$, then $s_{a'}(\pi(A))\supset\{\tfrac{N}{2}+1,s_{a'}(a_*)=N+a_*\}$, and therefore
    \begin{equation}
        r(s_{a'}(\pi(A)))\geq \tfrac{N}{2}+a_*> \tfrac{N}{2}+1.
    \end{equation}

    It then follows from $r([\Phi,\Psi])\leq \tfrac{N}{2}+1$ and Eq.~\eqref{eq:component_finite_op} that, for $a=\min A$,
    \begin{equation}
        (\Phi_{[N]})_{\pi(A)}=\tau(s_{\pi(a)})^\dagger\Phi(s_{\pi(a)}(\pi(A)))\tau(s_{\pi(a)}).
    \end{equation}
    As in the case $r(A)\leq \tfrac{N+1}{2}$, we conclude that $[\Phi,\Psi](A)=0$.
\end{proof}

\section{Additional proofs}
\subsection{Complete proof of Theorem~\ref{thm:proof_GM2}}\label{app:proof_GM2}
We begin with the following lemma.
\begin{lem}\label{lem:one-local_double_commutator}
    Suppose that the Hamiltonian $\hat{H}$ satisfies Assumption~\eqref{eq:assumption_1dim}, 
    and let $\hat{A}$ be a one-local operator. 
    If $[\hat{H},[\hat{H},\hat{A}]]=0$, then $[\hat{H},\hat{A}]=0$.
\end{lem}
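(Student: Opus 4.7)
The plan is to whittle $\hat B := [\hat H, \hat A]$ down to zero by analyzing $[\hat H, \hat B] = 0$ at successive locality levels. Writing $\hat A = \sum_i \hat a_i$ with $\hat a_i \in \bound_{0i}$, direct expansion gives $\hat B = \sum_i \hat b_i^{(2)} + \sum_i \hat b_i^{(1)}$, where $\hat b_i^{(2)} = [\hat h_i^{(2)}, \hat a_i \otimes \hat I + \hat I \otimes \hat a_{i+1}] \in \bound_{0i} \otimes \bound_{0,i+1}$ and $\hat b_i^{(1)} = [\hat h_i^{(1)}, \hat a_i] \in \bound_{0i}$. Extracting the three-local component of $[\hat H, \hat B]$ at sites $i, i+1, i+2$ yields $[\hat h_i^{(2)}, \hat b_{i+1}^{(2)}] = [\hat b_i^{(2)}, \hat h_{i+1}^{(2)}]$ for every $i$, so Assumption~\eqref{eq:assumption_1dim} forces $\hat b_i^{(2)} = \alpha\,\hat h_i^{(2)}$ for some $i$-independent constant $\alpha$.

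To pin $\alpha$ to zero I would pair both sides of $[\hat h_i^{(2)}, \hat a_i \otimes \hat I + \hat I \otimes \hat a_{i+1}] = \alpha \hat h_i^{(2)}$ with $\hat h_i^{(2)}$ under the Hilbert--Schmidt inner product on the two-site space. Cyclicity of the trace gives $\Tr(\hat h_i^{(2)}[\hat h_i^{(2)}, X]) = 0$ for any operator $X$, so the left side vanishes, while the right side equals $\alpha\,\Tr((\hat h_i^{(2)})^2)$ with $\Tr((\hat h_i^{(2)})^2) > 0$ because $\hat h_i^{(2)}$ is a nonzero Hermitian operator. Hence $\alpha = 0$, and $\hat B$ collapses to the one-local operator $\sum_i [\hat h_i^{(1)}, \hat a_i]$.

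The one-local component of $[\hat H, \hat B] = 0$ then reads $\sum_i [\hat h_i^{(1)}, \hat b_i^{(1)}] = 0$; as the summands live on disjoint sites this splits into $[\hat h_i^{(1)}, [\hat h_i^{(1)}, \hat a_i]] = 0$ for each $i$, an equation on the finite-dimensional single-site space $\Hilb_i$. Applying the Hilbert--Schmidt argument of Lemma~\ref{lem:double_commutator} on a single site (where translational invariance is automatic) yields $[\hat h_i^{(1)}, \hat a_i] = 0$, i.e., $\hat b_i^{(1)} = 0$ for every $i$, and therefore $\hat B = 0$. The main obstacle I expect is the middle step: Assumption~\eqref{eq:assumption_1dim} alone pins $\hat b_i^{(2)}$ down only up to a multiple of $\hat h_i^{(2)}$, and a priori $\hat B$ could realize a nonzero multiple of $\hat H$ at the two-local level; the cyclicity trick eliminates this residual multiple without invoking any further structural assumption.
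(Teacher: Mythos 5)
Your proof is correct and follows essentially the same route as the paper: expand $[\hat H,\hat A]$ into a two-local and a one-local piece, use the three-local part of $[\hat H,[\hat H,\hat A]]=0$ together with Assumption~\eqref{eq:assumption_1dim} to force $\hat b_i^{(2)}=\alpha\hat h_i^{(2)}$, kill $\alpha$ by Hilbert--Schmidt orthogonality of a commutator against $\hat h_i^{(2)}$, and then dispatch the remaining one-local piece by the Hermiticity of $\hat h_i^{(1)}$. The only cosmetic difference is that the paper covers the degenerate case $\hat h_i^{(2)}=0$ parenthetically, whereas you assume $\hat h_i^{(2)}\neq 0$; this is harmless since in that case the two-local piece is identically zero anyway.
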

\begin{proof}
    One can write 
    \[
        [\hat{H},\hat{A}]=\sum_{j}[\hat{h}_j^{(2)},\hat{a}_j^{(1)}+\hat{a}_{j+1}^{(1)}]+[\hat{h}_j^{(1)},\hat{a}_j^{(1)}],
    \]
    which is at most two-local.  
    From $[\hat{H},[\hat{H},\hat{A}]]=0$ and Assumption~\eqref{eq:assumption_1dim}, it follows that
    \begin{equation}
        [\hat{h}_j^{(2)},\hat{a}_j^{(1)}+\hat{a}_{j+1}^{(1)}]=\alpha\hat{h}_j^{(2)},
        \label{eq:one-local_double_com}
    \end{equation}
    for some site-independent constant $\alpha$.  
    However, the left-hand side of Eq.~\eqref{eq:one-local_double_com} is orthogonal to $\hat{h}_j^{(2)}$ with respect to the Hilbert--Schmidt inner product, so $\alpha=0$ 
    (or $\hat{h}_j^{(2)}=0$).  
    Therefore,
    \[
        [\hat{H},\hat{A}]=\sum_{j}[\hat{h}_j^{(1)},\hat{a}_j^{(1)}],
    \]
    which is in particular one-local.  
    Looking at the one-local component of $[\hat{H},[\hat{H},\hat{A}]]$, we find
    \begin{equation}
        [\hat{h}_j^{(1)},[\hat{h}_j^{(1)},\hat{a}_j^{(1)}]]=0.
    \end{equation}
    Since $\hat{h}_j^{(1)}$ is Hermitian, this implies $[\hat{h}_j^{(1)},\hat{a}_j^{(1)}]=0$.  
    Hence, $[\hat{H},\hat{A}]=0$.
\end{proof}

This lemma is analogous to Lemma~\ref{lem:double_commutator}, except that it relies on one-locality and holds even if $\hat{A}$ is not translation-invariant.  
Using it, we now show that the boost operator $\hat{B}$ must take the form of Eq.~\eqref{eq:form_of_boost}.
\begin{proof}[Proof of Eq.~\eqref{eq:form_of_boost}]
We first determine $\hat{b}_j^{(2)}$.  
Among the components of $\hat{Q}^{(3)}=[\hat{B},\hat{H}]$, the $3$-local part acting nontrivially on sites $j$ and $j+2$ can be written as
\begin{equation}
    [\hat{b}_j^{(2)},\hat{h}_{j+1}^{(2)}]-[\hat{h}_j^{(2)},\hat{b}_{j+1}^{(2)}].
    \label{eq:boosted_three-local}
\end{equation}
From Assumptions~\eqref{eq:assumption_injective} and \eqref{eq:assumption_1dim},  
Eq.~\eqref{eq:boosted_three-local} must be proportional to $[\hat{h}_j^{(2)},\hat{h}_{j+1}^{(2)}]$~\cite{hokkyoRigorousTestQuantum2025}, i.e.,
\begin{equation}
    [\hat{b}_j^{(2)},\hat{h}_{j+1}^{(2)}]-[\hat{h}_j^{(2)},\hat{b}_{j+1}^{(2)}]
    =-\gamma[\hat{h}_j^{(2)},\hat{h}_{j+1}^{(2)}]
\end{equation}
for some $\gamma\in\C$.  
Thus, defining $\hat{a}_j^{(2)}=\hat{b}_j^{(2)}-\gamma j\hat{h}_j^{(2)}$,  
Assumption~\eqref{eq:assumption_1dim} implies $\hat{b}_j^{(2)}=(\gamma j+\alpha)\hat{h}_j^{(2)}$ for some $\alpha\in\C$.  
Hermiticity requires $\gamma,\alpha\in\R$.

From the above, one finds that for some one-local operator $\hat{Y}$,
\[
    \mathcal{T}(\hat{B})-\hat{B}=\gamma \sum_{j\in\Z}\hat{h}_j^{(2)}+\hat{Y}.
\]
Setting $\hat{X}'=\hat{Y}-\gamma\sum_{j\in\Z}\hat{h}_j^{(1)}$, we obtain
\[
    \mathcal{T}(\hat{B})-\hat{B}=\gamma \hat{H}+\hat{X}'.
\]
Since $[\hat{H},[\hat{B},\hat{H}]]=0$, we have
\begin{align}
    0&=[\hat{H},[\mathcal{T}(\hat{B})-\hat{B},\hat{H}]]\nonumber\\
    &=-[\hat{H},[\hat{H},\hat{X}']].
\end{align}
Therefore, Lemma~\ref{lem:one-local_double_commutator} yields $[\hat{H},\hat{X}']=0$.  
Hence,
\begin{align}
    &\hat{B}=\gamma\sum_{j\in\Z}j(\hat{h}^{(2)}_{j}+\hat{h}^{(1)}_j)+\hat{R}'+\alpha\hat{H},\\
    &[\hat{H},\mathcal{T}(\hat{R}')-\hat{R}']=0.
\end{align}
It remains to show that $\gamma\neq0$.

If $\gamma=0$, then $\hat{Q}^{(3)}=[\hat{R}',\hat{H}]$ and $[\hat{H},[\hat{R}',\hat{H}]]=0$.  
By Lemma~\ref{lem:one-local_double_commutator}, this would imply $\hat{Q}^{(3)}=0$, contradicting the assumption that $\hat{Q}^{(3)}\neq0$.  
Therefore, $\gamma\neq0$, and Eq.~\eqref{eq:form_of_boost} follows.
\end{proof}

Finally, we establish the following fact, whose proof had been deferred in Section~\ref{sec:proof_GM2}.
\begin{prop}\label{prop:trans_inv}
    Suppose that the Hamiltonian $\hat{H}$ satisfies Assumption~\eqref{eq:assumption_1dim}.  
    Let $\hat{X}$ be a one-local operator such that
    $\mathcal{T}(\hat{B})-\hat{B}=\hat{H}+\hat{X}$.  
    If $[\hat{H},\hat{X}]=[\hat{H},\hat{Q}^{(3)}]=0$, then
    $[\hat{X},\hat{Q}^{(n)}]=0$ for all $n\geq 2$. 
\end{prop}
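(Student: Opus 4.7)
The plan is to introduce the iterated family $\hat{Y}_0\coloneqq\hat{X}$ and $\hat{Y}_{k+1}\coloneqq[\hat{Y}_k,\hat{B}]$, and to prove the joint statement $[\hat{Y}_k,\hat{Q}^{(n)}]=0$ for all $k\geq 0$ and $n\geq 2$; setting $k=0$ then yields the Proposition. The core of the argument is a closure property ensuring that the entire orbit $\{\hat{Y}_k\}$ consists of one-local operators commuting with $\hat{H}$, after which a clean Jacobi manipulation suffices to close an induction on $n$.

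A preparatory local commutation lemma I would establish is that any one-local $\hat{V}=\sum_i \hat{v}_i$ with $\hat{v}_i\in\bound_{0i}$ and $[\hat{H},\hat{V}]=0$ in fact satisfies $[\hat{V},\hat{h}^{(2)}_j]=0$ and $[\hat{V},\hat{h}^{(1)}_j]=0$ \emph{separately for each $j$}. Indeed, the commutator $[\hat{V},\hat{H}]$ splits into a two-local piece $\sum_j [\hat{v}_j+\hat{v}_{j+1},\hat{h}^{(2)}_j]\in\bigoplus_j \bound_{0j}\otimes\bound_{0,j+1}$ and a one-local piece $\sum_j [\hat{v}_j,\hat{h}^{(1)}_j]\in\bigoplus_j \bound_{0j}$, and since distinct $j$ label summands in disjoint components of the canonical direct-sum decomposition of local operators, vanishing of the total forces term-by-term vanishing. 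Because the two-local part of $\hat{B}$ is a linear combination of the $\hat{h}^{(2)}_j$ (cf.~Eq.~\eqref{eq:form_of_boost}), this implies that $[\hat{V},\hat{B}]$ is one-local whenever $\hat{V}$ is one-local and conserved.

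Next I would show by induction on $k$ that each $\hat{Y}_k$ is one-local and satisfies $[\hat{H},\hat{Y}_k]=0$. The base $k=0$ is given by hypothesis. For the step, $\hat{Y}_{k+1}=[\hat{Y}_k,\hat{B}]$ is one-local by the previous paragraph, while two successive applications of the Jacobi identity give
\begin{equation*}
[\hat{H},\hat{Y}_{k+1}]=-[\hat{Y}_k,\hat{Q}^{(3)}],\qquad [\hat{H},[\hat{H},\hat{Y}_{k+1}]]=0,
\end{equation*}
where the second identity uses the hypothesis $[\hat{H},\hat{Q}^{(3)}]=0$ together with the inductive conservation of $\hat{Y}_k$. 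Lemma~\ref{lem:one-local_double_commutator} applied to the one-local operator $\hat{Y}_{k+1}$ then yields $[\hat{H},\hat{Y}_{k+1}]=0$ and closes the induction.

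Finally I would run induction on $n$ to prove $[\hat{Y}_k,\hat{Q}^{(n)}]=0$ for every $k\geq 0$. The base $n=2$ is the conservation of $\hat{Y}_k$ just established. For the step, the Jacobi identity gives
\begin{equation*}
[\hat{Y}_k,\hat{Q}^{(n+1)}]=[\hat{Y}_{k+1},\hat{Q}^{(n)}]+[\hat{B},[\hat{Y}_k,\hat{Q}^{(n)}]],
\end{equation*}
and the inductive hypothesis at level $n$, applied to both $\hat{Y}_k$ and $\hat{Y}_{k+1}$, makes both terms vanish; setting $k=0$ completes the proof. The principal obstacle that this double-indexed induction is designed to overcome is that the naive one-step reduction $[\hat{X},\hat{Q}^{(n+1)}]=[[\hat{X},\hat{B}],\hat{Q}^{(n)}]$ introduces a new one-local operator $[\hat{X},\hat{B}]$ that is not a member of $\{\hat{Q}^{(k)}\}$ and to which Proposition~\ref{prop:trans_inv}'s hypotheses do not directly apply; only by verifying that the whole orbit $\{\hat{Y}_k\}$ remains within the class of one-local conserved operators can one close the argument.
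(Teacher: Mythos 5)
Your proposal is correct and follows essentially the same route as the paper: you introduce the iterated family $\hat{Y}_k$ (the paper's $\hat{X}^{(k)}$, up to a sign convention), show inductively that every $\hat{Y}_k$ is one-local and conserved by combining a one-locality closure statement (the paper's Lemma~\ref{lem:one-local_conserve}) with the Jacobi trick and Lemma~\ref{lem:one-local_double_commutator}, and then close the induction on $n$ with the same Jacobi rearrangement. Your ``local commutation lemma'' is a mild repackaging of Lemma~\ref{lem:one-local_conserve} using the two-local part of $\hat{B}$ from Eq.~\eqref{eq:form_of_boost} rather than the decomposition $\hat{B}=\hat{B}_{\hat{H}}+\hat{R}$, but the content is the same.
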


To prove this, we use the following lemma.
\begin{lem}\label{lem:one-local_conserve}
    Let $\hat{A}$ be a one-local operator.  
    If $[\hat{H},\hat{A}]=0$, then $[\hat{B}_{\hat{H}},\hat{A}]$ is one-local.
\end{lem}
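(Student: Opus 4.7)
The plan is to expand $[\hat{B}_{\hat{H}},\hat{A}]$ directly from the definition of the boost operator, and then use the hypothesis $[\hat{H},\hat{A}]=0$ to kill the position-weighted piece. Writing $\hat{A}=\sum_i\hat{a}_i^{(1)}$ with $\hat{a}_i^{(1)}\in\bound_{0i}$, a short calculation gives
\[
    [\hat{B}_{\hat{H}},\hat{A}] \;=\; \sum_{j\in\Z} j\,\hat{g}_j \;+\; [\hat{C},\hat{A}],\qquad \hat{g}_j\coloneqq[\hat{h}_j^{(2)},\hat{a}_j^{(1)}+\hat{a}_{j+1}^{(1)}]+[\hat{h}_j^{(1)},\hat{a}_j^{(1)}].
\]
The commutator $[\hat{C},\hat{A}]$ is manifestly one-local as a commutator of two one-local operators, so the task reduces to controlling $\sum_j j\,\hat{g}_j$, given only that the unweighted sum vanishes, $\sum_j \hat{g}_j = [\hat{H},\hat{A}] = 0$.

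The key structural observation is that the two-local piece $[\hat{h}_j^{(2)},\hat{a}_j^{(1)}+\hat{a}_{j+1}^{(1)}]$ lies in $\bound_{0j}\otimes\bound_{0,j+1}$ and is thus traceless on both sites, while the remaining piece $[\hat{h}_j^{(1)},\hat{a}_j^{(1)}]\in\bound_{0j}$ is supported strictly on site $j$. Crucially, $\hat{g}_j$ carries no one-local contribution purely on site $j+1$. This means $\{\hat{g}_j\}$ is already in the canonical local-density form used in the SM framework, and uniqueness of that decomposition forces $\hat{g}_j=0$ for every $j$: the genuinely two-local parts live on disjoint edges and cannot cancel across different $j$, while the one-local parts live on disjoint sites with no partner available.

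Once $\hat{g}_j=0$ for all $j$, the weighted sum $\sum_j j\,\hat{g}_j$ vanishes trivially, and we conclude $[\hat{B}_{\hat{H}},\hat{A}]=[\hat{C},\hat{A}]$, which is one-local as required. The main subtle step is the invocation of uniqueness of the canonical density decomposition: a priori one might worry about a telescoping ambiguity $\hat{g}_j \mapsto \hat{g}_j + \hat{u}_j\otimes\hat{I} - \hat{I}\otimes\hat{u}_{j+1}$, but this is ruled out precisely because $\hat{g}_j$ has no right-site one-local component, so the canonical form of the SM pins down the densities uniquely and no such gauge shift is available.
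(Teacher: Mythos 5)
Your proof is correct and takes essentially the same approach as the paper: expand $[\hat{B}_{\hat{H}},\hat{A}]$ into local densities with $j$-weights, then use $[\hat{H},\hat{A}]=0$ to show the weighted terms vanish component-by-component (you verify the full $\hat{g}_j=0$, while the paper only needs the strictly two-local piece $[\hat{h}_j^{(2)},\hat{a}_j^{(1)}+\hat{a}_{j+1}^{(1)}]=0$, since the $j[\hat{h}_j^{(1)},\hat{a}_j^{(1)}]$ part is one-local regardless). The slight over-completeness is harmless; the core idea is identical.
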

\begin{proof}
    Write $\hat{A}=\sum_{j\in\Z}\hat{a}_j^{(1)}$.  
    Since $[\hat{H},\hat{A}]=0$, we have $[\hat{h}_j^{(2)},\hat{a}_j^{(1)}+\hat{a}_{j+1}^{(1)}]=0$.  
    On the other hand, the two-local component of $[\hat{B}_{\hat{H}},\hat{A}]$ is 
    $\sum_{j\in\Z}[j\hat{h}_j^{(2)},\hat{a}_j^{(1)}+\hat{a}_{j+1}^{(1)}]$, which vanishes.
\end{proof}

\begin{proof}[Proof of Proposition~\ref{prop:trans_inv}]
    We define $\hat{X}^{(1)}=\hat{X}$ and $\hat{X}^{(k+1)}=[\hat{B},\hat{X}^{(k)}]$.  
    We first show that each $\hat{X}^{(n)}$ is one-local and commutes with $\hat{H}$.  
    For $n=1$, this holds by assumption.  
    Suppose it holds for $n=k$.  
    Then, by Lemma~\ref{lem:one-local_conserve}, $\hat{X}^{(k+1)}$ is also one-local.  
    Moreover, by the same reasoning as in Lemma~\ref{lem:key}, $[\hat{H},[\hat{H},\hat{X}^{(k+1)}]]=0$.  
    Thus, Lemma~\ref{lem:one-local_double_commutator} implies $[\hat{H},\hat{X}^{(k+1)}]=0$.

    Next, we prove $[\hat{X},\hat{Q}^{(n)}]=0$.  
    More generally, we show $[\hat{X}^{(l)},\hat{Q}^{(n)}]=0$ for all $l\geq 1$.  
    For $n=2$, this holds by the above argument.  
    Suppose it holds for $n=k$.  
    Then, by the Jacobi identity,
    \begin{equation}
        [\hat{X}^{(l)},\hat{Q}^{(k+1)}]=
        -[\hat{X}^{(l+1)},\hat{Q}^{(k)}]+[\hat{B},[\hat{X}^{(l)},\hat{Q}^{(k)}]]=0.
    \end{equation}
    The claim follows.
\end{proof}

\subsection{Proof of Proposition~\ref{thm:abstract_boost}}\label{sec:abstract_boost}
We first show that $[\hat{H},\hat{X}]=0$.  
Since $\hat{H}$ is translation-invariant, $[\hat{H},\mathcal{T}(\hat{Q}^{(3)})]=0$ holds as well.  
Using Assumption~\eqref{eq:abstract_boost}, this is equivalent to
\begin{equation}
    [\hat{H},\hat{Q}^{(3)}+[\hat{X},\hat{H}]]=0.
\end{equation}
Hence $[\hat{H},[\hat{H},\hat{X}]]=0$, and by Lemma~\ref{lem:double_commutator}, $[\hat{H},\hat{X}]=0$.

Next, we show that $[\hat{X},\hat{Q}^{(3)}]=0$.  
From $\mathcal{T}(\hat{B})-\hat{B}=\hat{X}$, it follows that for any translation-invariant operator $\hat{A}$ commuting with $\hat{X}$, the commutator $[\hat{B},\hat{A}]$ is also translation-invariant.  
In particular, $\hat{Q}^{(3)}=[\hat{B},\hat{H}]$ is translation-invariant.  
Moreover, as in Lemma~\ref{lem:key}, one finds $[\hat{X},[\hat{X},\hat{Q}^{(3)}]]=0$.  
Therefore, by Lemma~\ref{lem:double_commutator}, we conclude $[\hat{X},\hat{Q}^{(3)}]=0$.

It then remains to show that $\hat{Q}^{(n)}$ commutes with both $\hat{H}$ and $\hat{X}$.  
This reduces to the following generalization of Lemma~\ref{lem:key}:  
if $[\hat{H},\hat{A}]=[\hat{X},\hat{A}]=0$, then 
\[
    [\hat{H},[\hat{B},\hat{A}]]=[\hat{X},[\hat{B},\hat{A}]]=0.
\]
Indeed, from the Jacobi identity and the assumptions, one has $[\hat{H},[\hat{B},\hat{A}]]=[\hat{A},\hat{Q}^{(3)}]$.  
Since both $\hat{A}$ and $\hat{Q}^{(3)}$ commute with $\hat{H}$, it follows that $[\hat{H},[\hat{H},[\hat{B},\hat{A}]]]=0$.  
By Lemma~\ref{lem:double_commutator}, this implies $[\hat{H},[\hat{B},\hat{A}]]=0$.  
The same reasoning shows $[\hat{X},[\hat{B},\hat{A}]]=0$.

Finally, the mutual commutativity of $\{\hat{Q}^{(n)}\}$ can be established in exactly the same way as in Lemma~\ref{lem:property_boost}. \qed

\acknowledgments
The author is very grateful to Naoto Shiraishi, Yuuya Chiba and Mizuki Yamaguchi for helpful discussions. 
The author also thanks Balázs Pozsgay, Chiara Paletta and Jon Links for their valuable comments on the manuscript.
This work was supported by KAKENHI Grant
No. JP25KJ0833 from the Japan Society for the
Promotion of Science (JSPS) and FoPM, a WINGS Program, 
the University of Tokyo.
The author also acknowledges support from JSR Fellowship, the University of Tokyo. 

\bibliography{bibliography}

%apsrev4-2.bst 2019-01-14 (MD) hand-edited version of apsrev4-1.bst
%Control: key (0)
%Control: author (8) initials jnrlst
%Control: editor formatted (1) identically to author
%Control: production of article title (0) allowed
%Control: page (0) single
%Control: year (1) truncated
%Control: production of eprint (0) enabled
\begin{thebibliography}{40}%
\makeatletter
\providecommand \@ifxundefined [1]{%
 \@ifx{#1\undefined}
}%
\providecommand \@ifnum [1]{%
 \ifnum #1\expandafter \@firstoftwo
 \else \expandafter \@secondoftwo
 \fi
}%
\providecommand \@ifx [1]{%
 \ifx #1\expandafter \@firstoftwo
 \else \expandafter \@secondoftwo
 \fi
}%
\providecommand \natexlab [1]{#1}%
\providecommand \enquote  [1]{``#1''}%
\providecommand \bibnamefont  [1]{#1}%
\providecommand \bibfnamefont [1]{#1}%
\providecommand \citenamefont [1]{#1}%
\providecommand \href@noop [0]{\@secondoftwo}%
\providecommand \href [0]{\begingroup \@sanitize@url \@href}%
\providecommand \@href[1]{\@@startlink{#1}\@@href}%
\providecommand \@@href[1]{\endgroup#1\@@endlink}%
\providecommand \@sanitize@url [0]{\catcode `\\12\catcode `\$12\catcode `\&12\catcode `\#12\catcode `\^12\catcode `\_12\catcode `\%12\relax}%
\providecommand \@@startlink[1]{}%
\providecommand \@@endlink[0]{}%
\providecommand \url  [0]{\begingroup\@sanitize@url \@url }%
\providecommand \@url [1]{\endgroup\@href {#1}{\urlprefix }}%
\providecommand \urlprefix  [0]{URL }%
\providecommand \Eprint [0]{\href }%
\providecommand \doibase [0]{https://doi.org/}%
\providecommand \selectlanguage [0]{\@gobble}%
\providecommand \bibinfo  [0]{\@secondoftwo}%
\providecommand \bibfield  [0]{\@secondoftwo}%
\providecommand \translation [1]{[#1]}%
\providecommand \BibitemOpen [0]{}%
\providecommand \bibitemStop [0]{}%
\providecommand \bibitemNoStop [0]{.\EOS\space}%
\providecommand \EOS [0]{\spacefactor3000\relax}%
\providecommand \BibitemShut  [1]{\csname bibitem#1\endcsname}%
\let\auto@bib@innerbib\@empty
%</preamble>
\bibitem [{\citenamefont {Bethe}(1931)}]{betheZurTheorieMetalle1931}%
  \BibitemOpen
  \bibfield  {author} {\bibinfo {author} {\bibfnamefont {H.}~\bibnamefont {Bethe}},\ }\bibfield  {title} {\bibinfo {title} {{Zur Theorie der Metalle}},\ }\href {https://doi.org/10.1007/BF01341708} {\bibfield  {journal} {\bibinfo  {journal} {Z. Physik}\ }\textbf {\bibinfo {volume} {71}},\ \bibinfo {pages} {205} (\bibinfo {year} {1931})}\BibitemShut {NoStop}%
\bibitem [{\citenamefont {Bertini}\ \emph {et~al.}(2016)\citenamefont {Bertini}, \citenamefont {Collura}, \citenamefont {De~Nardis},\ and\ \citenamefont {Fagotti}}]{bertiniTransportOutofEquilibriumXXZ2016}%
  \BibitemOpen
  \bibfield  {author} {\bibinfo {author} {\bibfnamefont {B.}~\bibnamefont {Bertini}}, \bibinfo {author} {\bibfnamefont {M.}~\bibnamefont {Collura}}, \bibinfo {author} {\bibfnamefont {J.}~\bibnamefont {De~Nardis}},\ and\ \bibinfo {author} {\bibfnamefont {M.}~\bibnamefont {Fagotti}},\ }\bibfield  {title} {\bibinfo {title} {Transport in out-of-equilibrium {${XXZ}$} chains: Exact profiles of charges and currents},\ }\href {https://doi.org/10.1103/PhysRevLett.117.207201} {\bibfield  {journal} {\bibinfo  {journal} {Phys. Rev. Lett.}\ }\textbf {\bibinfo {volume} {117}},\ \bibinfo {pages} {207201} (\bibinfo {year} {2016})}\BibitemShut {NoStop}%
\bibitem [{\citenamefont {{Castro-Alvaredo}}\ \emph {et~al.}(2016)\citenamefont {{Castro-Alvaredo}}, \citenamefont {Doyon},\ and\ \citenamefont {Yoshimura}}]{castro-alvaredoEmergentHydrodynamicsIntegrable2016}%
  \BibitemOpen
  \bibfield  {author} {\bibinfo {author} {\bibfnamefont {O.~A.}\ \bibnamefont {{Castro-Alvaredo}}}, \bibinfo {author} {\bibfnamefont {B.}~\bibnamefont {Doyon}},\ and\ \bibinfo {author} {\bibfnamefont {T.}~\bibnamefont {Yoshimura}},\ }\bibfield  {title} {\bibinfo {title} {Emergent {{Hydrodynamics}} in {{Integrable Quantum Systems Out}} of {{Equilibrium}}},\ }\href {https://doi.org/10.1103/PhysRevX.6.041065} {\bibfield  {journal} {\bibinfo  {journal} {Phys. Rev. X}\ }\textbf {\bibinfo {volume} {6}},\ \bibinfo {pages} {041065} (\bibinfo {year} {2016})}\BibitemShut {NoStop}%
\bibitem [{\citenamefont {Caux}\ and\ \citenamefont {Mossel}(2011)}]{cauxRemarksNotionQuantum2011}%
  \BibitemOpen
  \bibfield  {author} {\bibinfo {author} {\bibfnamefont {J.-S.}\ \bibnamefont {Caux}}\ and\ \bibinfo {author} {\bibfnamefont {J.}~\bibnamefont {Mossel}},\ }\bibfield  {title} {\bibinfo {title} {Remarks on the notion of quantum integrability},\ }\href {https://doi.org/10.1088/1742-5468/2011/02/P02023} {\bibfield  {journal} {\bibinfo  {journal} {J. Stat. Mech.}\ }\textbf {\bibinfo {volume} {2011}},\ \bibinfo {pages} {P02023} (\bibinfo {year} {2011})}\BibitemShut {NoStop}%
\bibitem [{\citenamefont {Gogolin}\ and\ \citenamefont {Eisert}(2016)}]{gogolinEquilibrationThermalisationEmergence2016}%
  \BibitemOpen
  \bibfield  {author} {\bibinfo {author} {\bibfnamefont {C.}~\bibnamefont {Gogolin}}\ and\ \bibinfo {author} {\bibfnamefont {J.}~\bibnamefont {Eisert}},\ }\bibfield  {title} {\bibinfo {title} {Equilibration, thermalisation, and the emergence of statistical mechanics in closed quantum systems},\ }\href {https://doi.org/10.1088/0034-4885/79/5/056001} {\bibfield  {journal} {\bibinfo  {journal} {Rep. Prog. Phys.}\ }\textbf {\bibinfo {volume} {79}},\ \bibinfo {pages} {056001} (\bibinfo {year} {2016})}\BibitemShut {NoStop}%
\bibitem [{\citenamefont {Langen}\ \emph {et~al.}(2015)\citenamefont {Langen}, \citenamefont {Erne}, \citenamefont {Geiger}, \citenamefont {Rauer}, \citenamefont {Schweigler}, \citenamefont {Kuhnert}, \citenamefont {Rohringer}, \citenamefont {Mazets}, \citenamefont {Gasenzer},\ and\ \citenamefont {Schmiedmayer}}]{langenExperimentalObservationGeneralized2015}%
  \BibitemOpen
  \bibfield  {author} {\bibinfo {author} {\bibfnamefont {T.}~\bibnamefont {Langen}}, \bibinfo {author} {\bibfnamefont {S.}~\bibnamefont {Erne}}, \bibinfo {author} {\bibfnamefont {R.}~\bibnamefont {Geiger}}, \bibinfo {author} {\bibfnamefont {B.}~\bibnamefont {Rauer}}, \bibinfo {author} {\bibfnamefont {T.}~\bibnamefont {Schweigler}}, \bibinfo {author} {\bibfnamefont {M.}~\bibnamefont {Kuhnert}}, \bibinfo {author} {\bibfnamefont {W.}~\bibnamefont {Rohringer}}, \bibinfo {author} {\bibfnamefont {I.~E.}\ \bibnamefont {Mazets}}, \bibinfo {author} {\bibfnamefont {T.}~\bibnamefont {Gasenzer}},\ and\ \bibinfo {author} {\bibfnamefont {J.}~\bibnamefont {Schmiedmayer}},\ }\bibfield  {title} {\bibinfo {title} {Experimental observation of a generalized {{Gibbs}} ensemble},\ }\href {https://doi.org/10.1126/science.1257026} {\bibfield  {journal} {\bibinfo  {journal} {Science}\ }\textbf {\bibinfo {volume} {348}},\ \bibinfo {pages} {207} (\bibinfo {year} {2015})}\BibitemShut {NoStop}%
\bibitem [{\citenamefont {Kinoshita}\ \emph {et~al.}(2006)\citenamefont {Kinoshita}, \citenamefont {Wenger},\ and\ \citenamefont {Weiss}}]{kinoshitaQuantumNewtonsCradle2006}%
  \BibitemOpen
  \bibfield  {author} {\bibinfo {author} {\bibfnamefont {T.}~\bibnamefont {Kinoshita}}, \bibinfo {author} {\bibfnamefont {T.}~\bibnamefont {Wenger}},\ and\ \bibinfo {author} {\bibfnamefont {D.~S.}\ \bibnamefont {Weiss}},\ }\bibfield  {title} {\bibinfo {title} {A quantum {{Newton}}'s cradle},\ }\href {https://doi.org/10.1038/nature04693} {\bibfield  {journal} {\bibinfo  {journal} {Nature}\ }\textbf {\bibinfo {volume} {440}},\ \bibinfo {pages} {900} (\bibinfo {year} {2006})}\BibitemShut {NoStop}%
\bibitem [{\citenamefont {Ronzheimer}\ \emph {et~al.}(2013)\citenamefont {Ronzheimer}, \citenamefont {Schreiber}, \citenamefont {Braun}, \citenamefont {Hodgman}, \citenamefont {Langer}, \citenamefont {McCulloch}, \citenamefont {{Heidrich-Meisner}}, \citenamefont {Bloch},\ and\ \citenamefont {Schneider}}]{ronzheimerExpansionDynamicsInteracting2013}%
  \BibitemOpen
  \bibfield  {author} {\bibinfo {author} {\bibfnamefont {J.~P.}\ \bibnamefont {Ronzheimer}}, \bibinfo {author} {\bibfnamefont {M.}~\bibnamefont {Schreiber}}, \bibinfo {author} {\bibfnamefont {S.}~\bibnamefont {Braun}}, \bibinfo {author} {\bibfnamefont {S.~S.}\ \bibnamefont {Hodgman}}, \bibinfo {author} {\bibfnamefont {S.}~\bibnamefont {Langer}}, \bibinfo {author} {\bibfnamefont {I.~P.}\ \bibnamefont {McCulloch}}, \bibinfo {author} {\bibfnamefont {F.}~\bibnamefont {{Heidrich-Meisner}}}, \bibinfo {author} {\bibfnamefont {I.}~\bibnamefont {Bloch}},\ and\ \bibinfo {author} {\bibfnamefont {U.}~\bibnamefont {Schneider}},\ }\bibfield  {title} {\bibinfo {title} {Expansion {{Dynamics}} of {{Interacting Bosons}} in {{Homogeneous Lattices}} in {{One}} and {{Two Dimensions}}},\ }\href {https://doi.org/10.1103/PhysRevLett.110.205301} {\bibfield  {journal} {\bibinfo  {journal} {Phys. Rev. Lett.}\ }\textbf {\bibinfo {volume} {110}},\ \bibinfo {pages} {205301} (\bibinfo {year} {2013})}\BibitemShut {NoStop}%
\bibitem [{\citenamefont {Korepin}\ \emph {et~al.}(1993)\citenamefont {Korepin}, \citenamefont {Bogoliubov},\ and\ \citenamefont {Izergin}}]{korepinQuantumInverseScattering1993}%
  \BibitemOpen
  \bibfield  {author} {\bibinfo {author} {\bibfnamefont {V.~E.}\ \bibnamefont {Korepin}}, \bibinfo {author} {\bibfnamefont {N.~M.}\ \bibnamefont {Bogoliubov}},\ and\ \bibinfo {author} {\bibfnamefont {A.~G.}\ \bibnamefont {Izergin}},\ }\href {https://doi.org/10.1017/CBO9780511628832} {\emph {\bibinfo {title} {Quantum {{Inverse Scattering Method}} and {{Correlation Functions}}}}},\ Cambridge {{Monographs}} on {{Mathematical Physics}}\ (\bibinfo  {publisher} {Cambridge University Press},\ \bibinfo {address} {Cambridge},\ \bibinfo {year} {1993})\BibitemShut {NoStop}%
\bibitem [{\citenamefont {Kulish}\ and\ \citenamefont {Sklyanin}(1982)}]{kulishQuantumSpectralTransform1982}%
  \BibitemOpen
  \bibfield  {author} {\bibinfo {author} {\bibfnamefont {P.~P.}\ \bibnamefont {Kulish}}\ and\ \bibinfo {author} {\bibfnamefont {E.~K.}\ \bibnamefont {Sklyanin}},\ }\bibfield  {title} {\bibinfo {title} {Quantum spectral transform method recent developments},\ }in\ \href {https://doi.org/10.1007/3-540-11190-5_8} {\emph {\bibinfo {booktitle} {Integrable {{Quantum Field Theor}}.}}},\ \bibinfo {editor} {edited by\ \bibinfo {editor} {\bibfnamefont {J.}~\bibnamefont {Hietarinta}}\ and\ \bibinfo {editor} {\bibfnamefont {C.}~\bibnamefont {Montonen}}}\ (\bibinfo  {publisher} {Springer},\ \bibinfo {address} {Berlin, Heidelberg},\ \bibinfo {year} {1982})\ pp.\ \bibinfo {pages} {61--119}\BibitemShut {NoStop}%
\bibitem [{\citenamefont {Jimbo}\ and\ \citenamefont {Miwa}(1984)}]{jimboRemarksDifferentialApproach1984}%
  \BibitemOpen
  \bibfield  {author} {\bibinfo {author} {\bibfnamefont {M.}~\bibnamefont {Jimbo}}\ and\ \bibinfo {author} {\bibfnamefont {T.}~\bibnamefont {Miwa}},\ }\bibfield  {title} {\bibinfo {title} {Some remarks on the differential approach to the star-triangle relation},\ }\href {https://doi.org/10.1007/BF00400983} {\bibfield  {journal} {\bibinfo  {journal} {Lett. Math. Phys.}\ }\textbf {\bibinfo {volume} {8}},\ \bibinfo {pages} {529} (\bibinfo {year} {1984})}\BibitemShut {NoStop}%
\bibitem [{\citenamefont {Dolan}\ and\ \citenamefont {Grady}(1982)}]{dolanConservedChargesSelfduality1982}%
  \BibitemOpen
  \bibfield  {author} {\bibinfo {author} {\bibfnamefont {L.}~\bibnamefont {Dolan}}\ and\ \bibinfo {author} {\bibfnamefont {M.}~\bibnamefont {Grady}},\ }\bibfield  {title} {\bibinfo {title} {Conserved charges from self-duality},\ }\href {https://doi.org/10.1103/PhysRevD.25.1587} {\bibfield  {journal} {\bibinfo  {journal} {Phys. Rev. D}\ }\textbf {\bibinfo {volume} {25}},\ \bibinfo {pages} {1587} (\bibinfo {year} {1982})}\BibitemShut {NoStop}%
\bibitem [{\citenamefont {Jimbo}\ and\ \citenamefont {Miwa}(1985)}]{jimboClassificationSolutionsStartriangle1985}%
  \BibitemOpen
  \bibfield  {author} {\bibinfo {author} {\bibfnamefont {M.}~\bibnamefont {Jimbo}}\ and\ \bibinfo {author} {\bibfnamefont {T.}~\bibnamefont {Miwa}},\ }\bibfield  {title} {\bibinfo {title} {Classification of solutions to the star-triangle relation for a class of 3- and 4-state {{IRF}} models},\ }\href {https://doi.org/10.1016/0550-3213(85)90332-3} {\bibfield  {journal} {\bibinfo  {journal} {Nuclear Physics B}\ }\textbf {\bibinfo {volume} {257}},\ \bibinfo {pages} {1} (\bibinfo {year} {1985})}\BibitemShut {NoStop}%
\bibitem [{\citenamefont {Kennedy}(1992)}]{kennedySolutionsYangBaxterEquation1992}%
  \BibitemOpen
  \bibfield  {author} {\bibinfo {author} {\bibfnamefont {T.}~\bibnamefont {Kennedy}},\ }\bibfield  {title} {\bibinfo {title} {Solutions of the {{Yang-Baxter}} equation for isotropic quantum spin chains},\ }\href {https://doi.org/10.1088/0305-4470/25/10/010} {\bibfield  {journal} {\bibinfo  {journal} {J. Phys. A: Math. Gen.}\ }\textbf {\bibinfo {volume} {25}},\ \bibinfo {pages} {2809} (\bibinfo {year} {1992})}\BibitemShut {NoStop}%
\bibitem [{\citenamefont {Batchelor}\ and\ \citenamefont {Yung}(1994)}]{batchelorIntegrableSU2invariant1994}%
  \BibitemOpen
  \bibfield  {author} {\bibinfo {author} {\bibfnamefont {M.~T.}\ \bibnamefont {Batchelor}}\ and\ \bibinfo {author} {\bibfnamefont {C.~M.}\ \bibnamefont {Yung}},\ }\href@noop {} {\emph {\bibinfo {title} {Integrable {{SU}} (2)-Invariant {{Spin Chain}} and the {{Haldane Conjecture}}}}}\ (\bibinfo  {publisher} {World Scientific},\ \bibinfo {year} {1994})\BibitemShut {NoStop}%
\bibitem [{\citenamefont {Idzumi}\ \emph {et~al.}(1994)\citenamefont {Idzumi}, \citenamefont {Tokihiro},\ and\ \citenamefont {Arai}}]{idzumiSolvableNineteenvertexModels1994}%
  \BibitemOpen
  \bibfield  {author} {\bibinfo {author} {\bibfnamefont {M.}~\bibnamefont {Idzumi}}, \bibinfo {author} {\bibfnamefont {T.}~\bibnamefont {Tokihiro}},\ and\ \bibinfo {author} {\bibfnamefont {M.}~\bibnamefont {Arai}},\ }\bibfield  {title} {\bibinfo {title} {Solvable nineteen-vertex models and quantum spin chains of spin one},\ }\href {https://doi.org/10.1051/jp1:1994245} {\bibfield  {journal} {\bibinfo  {journal} {J. Phys. I France}\ }\textbf {\bibinfo {volume} {4}},\ \bibinfo {pages} {1151} (\bibinfo {year} {1994})}\BibitemShut {NoStop}%
\bibitem [{\citenamefont {Mutter}\ and\ \citenamefont {Schmitt}(1995)}]{mutterSolvableSpin1Models1995}%
  \BibitemOpen
  \bibfield  {author} {\bibinfo {author} {\bibfnamefont {K.~H.}\ \bibnamefont {Mutter}}\ and\ \bibinfo {author} {\bibfnamefont {A.}~\bibnamefont {Schmitt}},\ }\bibfield  {title} {\bibinfo {title} {Solvable spin-1 models in one dimension},\ }\href {https://doi.org/10.1088/0305-4470/28/8/018} {\bibfield  {journal} {\bibinfo  {journal} {J. Phys. A: Math. Gen.}\ }\textbf {\bibinfo {volume} {28}},\ \bibinfo {pages} {2265} (\bibinfo {year} {1995})}\BibitemShut {NoStop}%
\bibitem [{\citenamefont {Bibikov}(2003)}]{bibikovHowSolveYang2003}%
  \BibitemOpen
  \bibfield  {author} {\bibinfo {author} {\bibfnamefont {P.}~\bibnamefont {Bibikov}},\ }\bibfield  {title} {\bibinfo {title} {How to solve {{Yang}}--{{Baxter}} equation using the {{Taylor}} expansion of~{{R-matrix}}},\ }\href {https://doi.org/10.1016/s0375-9601(03)00818-1} {\bibfield  {journal} {\bibinfo  {journal} {Phys. Lett. A}\ }\textbf {\bibinfo {volume} {314}},\ \bibinfo {pages} {209} (\bibinfo {year} {2003})}\BibitemShut {NoStop}%
\bibitem [{\citenamefont {{de Leeuw}}\ \emph {et~al.}(2019)\citenamefont {{de Leeuw}}, \citenamefont {Pribytok},\ and\ \citenamefont {Ryan}}]{deleeuwClassifyingIntegrableSpin12019}%
  \BibitemOpen
  \bibfield  {author} {\bibinfo {author} {\bibfnamefont {M.}~\bibnamefont {{de Leeuw}}}, \bibinfo {author} {\bibfnamefont {A.}~\bibnamefont {Pribytok}},\ and\ \bibinfo {author} {\bibfnamefont {P.}~\bibnamefont {Ryan}},\ }\bibfield  {title} {\bibinfo {title} {Classifying integrable spin-1/2 chains with nearest neighbour interactions},\ }\href {https://doi.org/10.1088/1751-8121/ab529f} {\bibfield  {journal} {\bibinfo  {journal} {J. Phys. A: Math. Theor.}\ }\textbf {\bibinfo {volume} {52}},\ \bibinfo {pages} {505201} (\bibinfo {year} {2019})}\BibitemShut {NoStop}%
\bibitem [{\citenamefont {{de Leeuw}}\ \emph {et~al.}(2020)\citenamefont {{de Leeuw}}, \citenamefont {Paletta}, \citenamefont {Pribytok}, \citenamefont {Retore},\ and\ \citenamefont {Ryan}}]{deleeuwClassifyingNearestNeighborInteractions2020}%
  \BibitemOpen
  \bibfield  {author} {\bibinfo {author} {\bibfnamefont {M.}~\bibnamefont {{de Leeuw}}}, \bibinfo {author} {\bibfnamefont {C.}~\bibnamefont {Paletta}}, \bibinfo {author} {\bibfnamefont {A.}~\bibnamefont {Pribytok}}, \bibinfo {author} {\bibfnamefont {A.~L.}\ \bibnamefont {Retore}},\ and\ \bibinfo {author} {\bibfnamefont {P.}~\bibnamefont {Ryan}},\ }\bibfield  {title} {\bibinfo {title} {Classifying {{Nearest-Neighbor Interactions}} and {{Deformations}} of {{AdS}}},\ }\href {https://doi.org/10.1103/PhysRevLett.125.031604} {\bibfield  {journal} {\bibinfo  {journal} {Phys. Rev. Lett.}\ }\textbf {\bibinfo {volume} {125}},\ \bibinfo {pages} {031604} (\bibinfo {year} {2020})}\BibitemShut {NoStop}%
\bibitem [{\citenamefont {{de Leeuw}}\ \emph {et~al.}(2021)\citenamefont {{de Leeuw}}, \citenamefont {Paletta}, \citenamefont {Pribytok}, \citenamefont {Retore},\ and\ \citenamefont {Ryan}}]{deleeuwYangBaxterBoostSplitting2021}%
  \BibitemOpen
  \bibfield  {author} {\bibinfo {author} {\bibfnamefont {M.}~\bibnamefont {{de Leeuw}}}, \bibinfo {author} {\bibfnamefont {C.}~\bibnamefont {Paletta}}, \bibinfo {author} {\bibfnamefont {A.}~\bibnamefont {Pribytok}}, \bibinfo {author} {\bibfnamefont {A.~L.}\ \bibnamefont {Retore}},\ and\ \bibinfo {author} {\bibfnamefont {P.}~\bibnamefont {Ryan}},\ }\bibfield  {title} {\bibinfo {title} {Yang-{{Baxter}} and the {{Boost}}: Splitting the difference},\ }\href {https://doi.org/10.21468/SciPostPhys.11.3.069} {\bibfield  {journal} {\bibinfo  {journal} {SciPost Phys.}\ }\textbf {\bibinfo {volume} {11}},\ \bibinfo {pages} {069} (\bibinfo {year} {2021})}\BibitemShut {NoStop}%
\bibitem [{\citenamefont {Gombor}\ and\ \citenamefont {Pozsgay}(2021)}]{gomborIntegrableSpinChains2021}%
  \BibitemOpen
  \bibfield  {author} {\bibinfo {author} {\bibfnamefont {T.}~\bibnamefont {Gombor}}\ and\ \bibinfo {author} {\bibfnamefont {B.}~\bibnamefont {Pozsgay}},\ }\bibfield  {title} {\bibinfo {title} {Integrable spin chains and cellular automata with medium-range interaction},\ }\href {https://doi.org/10.1103/PhysRevE.104.054123} {\bibfield  {journal} {\bibinfo  {journal} {Phys. Rev. E}\ }\textbf {\bibinfo {volume} {104}},\ \bibinfo {pages} {054123} (\bibinfo {year} {2021})}\BibitemShut {NoStop}%
\bibitem [{\citenamefont {Corcoran}\ \emph {et~al.}(2025)\citenamefont {Corcoran}, \citenamefont {{de Leeuw}},\ and\ \citenamefont {Pozsgay}}]{corcoranIntegrableModelsRydberg2025}%
  \BibitemOpen
  \bibfield  {author} {\bibinfo {author} {\bibfnamefont {L.}~\bibnamefont {Corcoran}}, \bibinfo {author} {\bibfnamefont {M.}~\bibnamefont {{de Leeuw}}},\ and\ \bibinfo {author} {\bibfnamefont {B.}~\bibnamefont {Pozsgay}},\ }\bibfield  {title} {\bibinfo {title} {Integrable models on {{Rydberg}} atom chains},\ }\href {https://doi.org/10.21468/SciPostPhys.18.4.139} {\bibfield  {journal} {\bibinfo  {journal} {SciPost Phys.}\ }\textbf {\bibinfo {volume} {18}},\ \bibinfo {pages} {139} (\bibinfo {year} {2025})}\BibitemShut {NoStop}%
\bibitem [{\citenamefont {Lal}\ \emph {et~al.}(2025)\citenamefont {Lal}, \citenamefont {Majumder},\ and\ \citenamefont {Sobko}}]{lalDeepLearningBased2025}%
  \BibitemOpen
  \bibfield  {author} {\bibinfo {author} {\bibfnamefont {S.}~\bibnamefont {Lal}}, \bibinfo {author} {\bibfnamefont {S.}~\bibnamefont {Majumder}},\ and\ \bibinfo {author} {\bibfnamefont {E.}~\bibnamefont {Sobko}},\ }\bibfield  {title} {\bibinfo {title} {Deep {{Learning}} based discovery of {{Integrable Systems}}},\ }\Eprint {https://arxiv.org/abs/2503.10469} {arXiv:2503.10469 [hep-th]}  (\bibinfo {year} {2025})\BibitemShut {NoStop}%
\bibitem [{\citenamefont {Grabowski}\ and\ \citenamefont {Mathieu}(1995{\natexlab{a}})}]{grabowskiIntegrabilityTestSpin1995}%
  \BibitemOpen
  \bibfield  {author} {\bibinfo {author} {\bibfnamefont {M.~P.}\ \bibnamefont {Grabowski}}\ and\ \bibinfo {author} {\bibfnamefont {P.}~\bibnamefont {Mathieu}},\ }\bibfield  {title} {\bibinfo {title} {Integrability test for spin chains},\ }\href {https://doi.org/10.1088/0305-4470/28/17/013} {\bibfield  {journal} {\bibinfo  {journal} {J. Phys. A: Math. Gen.}\ }\textbf {\bibinfo {volume} {28}},\ \bibinfo {pages} {4777} (\bibinfo {year} {1995}{\natexlab{a}})}\BibitemShut {NoStop}%
\bibitem [{\citenamefont {Tetel'man}(1982)}]{tetelmanLorentzGroupTwodimensional1982}%
  \BibitemOpen
  \bibfield  {author} {\bibinfo {author} {\bibfnamefont {M.~G.}\ \bibnamefont {Tetel'man}},\ }\bibfield  {title} {\bibinfo {title} {Lorentz group for two-dimensional integrable lattice systems},\ }\href@noop {} {\bibfield  {journal} {\bibinfo  {journal} {Sov. J. Exp. Theor. Phys.}\ }\textbf {\bibinfo {volume} {55}},\ \bibinfo {pages} {306} (\bibinfo {year} {1982})}\BibitemShut {NoStop}%
\bibitem [{\citenamefont {Sogo}\ and\ \citenamefont {Wadati}(1983)}]{sogoBoostOperatorIts1983}%
  \BibitemOpen
  \bibfield  {author} {\bibinfo {author} {\bibfnamefont {K.}~\bibnamefont {Sogo}}\ and\ \bibinfo {author} {\bibfnamefont {M.}~\bibnamefont {Wadati}},\ }\bibfield  {title} {\bibinfo {title} {Boost {{Operator}} and {{Its Application}} to {{Quantum Gelfand-Levitan Equation}} for {{Heisenberg-Ising Chain}} with {{Spin One-Half}}},\ }\href {https://doi.org/10.1143/PTP.69.431} {\bibfield  {journal} {\bibinfo  {journal} {Progress of Theoretical Physics}\ }\textbf {\bibinfo {volume} {69}},\ \bibinfo {pages} {431} (\bibinfo {year} {1983})}\BibitemShut {NoStop}%
\bibitem [{\citenamefont {Thacker}(1986)}]{thackerCornerTransferMatrices1986}%
  \BibitemOpen
  \bibfield  {author} {\bibinfo {author} {\bibfnamefont {H.~B.}\ \bibnamefont {Thacker}},\ }\bibfield  {title} {\bibinfo {title} {Corner transfer matrices and {{Lorentz}} invariance on a lattice},\ }\href {https://doi.org/10.1016/0167-2789(86)90196-X} {\bibfield  {journal} {\bibinfo  {journal} {Physica D: Nonlinear Phenomena}\ }\textbf {\bibinfo {volume} {18}},\ \bibinfo {pages} {348} (\bibinfo {year} {1986})}\BibitemShut {NoStop}%
\bibitem [{\citenamefont {Hokkyo}(2025)}]{hokkyoRigorousTestQuantum2025}%
  \BibitemOpen
  \bibfield  {author} {\bibinfo {author} {\bibfnamefont {A.}~\bibnamefont {Hokkyo}},\ }\bibfield  {title} {\bibinfo {title} {Rigorous {{Test}} for {{Quantum Integrability}} and {{Nonintegrability}}},\ }\Eprint {https://arxiv.org/abs/2501.18400} {arXiv:2501.18400 [cond-mat]}  (\bibinfo {year} {2025})\BibitemShut {NoStop}%
\bibitem [{\citenamefont {Pozsgay}(2020)}]{pozsgayCurrentOperatorsIntegrable2020}%
  \BibitemOpen
  \bibfield  {author} {\bibinfo {author} {\bibfnamefont {B.}~\bibnamefont {Pozsgay}},\ }\bibfield  {title} {\bibinfo {title} {Current operators in integrable spin chains: Lessons from long range deformations},\ }\href {https://doi.org/10.21468/SciPostPhys.8.2.016} {\bibfield  {journal} {\bibinfo  {journal} {SciPost Phys.}\ }\textbf {\bibinfo {volume} {8}},\ \bibinfo {pages} {016} (\bibinfo {year} {2020})}\BibitemShut {NoStop}%
\bibitem [{\citenamefont {Surace}\ and\ \citenamefont {Motrunich}(2023)}]{suraceWeakIntegrabilityBreaking2023}%
  \BibitemOpen
  \bibfield  {author} {\bibinfo {author} {\bibfnamefont {F.~M.}\ \bibnamefont {Surace}}\ and\ \bibinfo {author} {\bibfnamefont {O.}~\bibnamefont {Motrunich}},\ }\bibfield  {title} {\bibinfo {title} {Weak integrability breaking perturbations of integrable models},\ }\href {https://doi.org/10.1103/PhysRevResearch.5.043019} {\bibfield  {journal} {\bibinfo  {journal} {Phys. Rev. Res.}\ }\textbf {\bibinfo {volume} {5}},\ \bibinfo {pages} {043019} (\bibinfo {year} {2023})}\BibitemShut {NoStop}%
\bibitem [{\citenamefont {Zhang}(2026)}]{zhangBootstrappingRmatrix2026}%
  \BibitemOpen
  \bibfield  {author} {\bibinfo {author} {\bibfnamefont {Z.}~\bibnamefont {Zhang}},\ }\bibfield  {title} {\bibinfo {title} {Bootstrapping the {{$R$}} -matrix},\ }\Eprint {https://arxiv.org/abs/2504.17773} {arXiv:2504.17773 [math-ph]}  (\bibinfo {year} {2026})\BibitemShut {NoStop}%
\bibitem [{\citenamefont {Yamaguchi}\ \emph {et~al.}(2024{\natexlab{a}})\citenamefont {Yamaguchi}, \citenamefont {Chiba},\ and\ \citenamefont {Shiraishi}}]{yamaguchiCompleteClassificationIntegrability2024}%
  \BibitemOpen
  \bibfield  {author} {\bibinfo {author} {\bibfnamefont {M.}~\bibnamefont {Yamaguchi}}, \bibinfo {author} {\bibfnamefont {Y.}~\bibnamefont {Chiba}},\ and\ \bibinfo {author} {\bibfnamefont {N.}~\bibnamefont {Shiraishi}},\ }\bibfield  {title} {\bibinfo {title} {Complete {{Classification}} of {{Integrability}} and {{Non-integrability}} for {{Spin-1}}/2 {{Chain}} with {{Symmetric Nearest-Neighbor Interaction}}},\ }\Eprint {https://arxiv.org/abs/2411.02162} {arXiv:2411.02162 [cond-mat]}  (\bibinfo {year} {2024}{\natexlab{a}})\BibitemShut {NoStop}%
\bibitem [{\citenamefont {Yamaguchi}\ \emph {et~al.}(2024{\natexlab{b}})\citenamefont {Yamaguchi}, \citenamefont {Chiba},\ and\ \citenamefont {Shiraishi}}]{yamaguchiProofAbsenceLocal2024}%
  \BibitemOpen
  \bibfield  {author} {\bibinfo {author} {\bibfnamefont {M.}~\bibnamefont {Yamaguchi}}, \bibinfo {author} {\bibfnamefont {Y.}~\bibnamefont {Chiba}},\ and\ \bibinfo {author} {\bibfnamefont {N.}~\bibnamefont {Shiraishi}},\ }\bibfield  {title} {\bibinfo {title} {Proof of the absence of local conserved quantities in general spin-1/2 chains with symmetric nearest-neighbor interaction},\ }\Eprint {https://arxiv.org/abs/2411.02163} {arXiv:2411.02163}  (\bibinfo {year} {2024}{\natexlab{b}})\BibitemShut {NoStop}%
\bibitem [{\citenamefont {Shiraishi}(2025)}]{shiraishiCompleteClassificationIntegrability2025}%
  \BibitemOpen
  \bibfield  {author} {\bibinfo {author} {\bibfnamefont {N.}~\bibnamefont {Shiraishi}},\ }\bibfield  {title} {\bibinfo {title} {Complete {{Classification}} of {{Integrability}} and {{Non-Integrability}} of {{S}}=1/2 {{Spin Chains}} with {{Symmetric Next-Nearest-Neighbor Interaction}}},\ }\href {https://doi.org/10.1007/s10955-025-03551-5} {\bibfield  {journal} {\bibinfo  {journal} {J. Stat. Phys.}\ }\textbf {\bibinfo {volume} {192}},\ \bibinfo {pages} {170} (\bibinfo {year} {2025})}\BibitemShut {NoStop}%
\bibitem [{\citenamefont {Shiraishi}(2019)}]{shiraishiProofAbsenceLocal2019}%
  \BibitemOpen
  \bibfield  {author} {\bibinfo {author} {\bibfnamefont {N.}~\bibnamefont {Shiraishi}},\ }\bibfield  {title} {\bibinfo {title} {Proof of the absence of local conserved quantities in the {{XYZ}} chain with a magnetic field},\ }\href {https://doi.org/10.1209/0295-5075/128/17002} {\bibfield  {journal} {\bibinfo  {journal} {EPL}\ }\textbf {\bibinfo {volume} {128}},\ \bibinfo {pages} {17002} (\bibinfo {year} {2019})}\BibitemShut {NoStop}%
\bibitem [{\citenamefont {Grabowski}\ and\ \citenamefont {Mathieu}(1995{\natexlab{b}})}]{grabowskiStructureConservationLaws1995}%
  \BibitemOpen
  \bibfield  {author} {\bibinfo {author} {\bibfnamefont {M.~P.}\ \bibnamefont {Grabowski}}\ and\ \bibinfo {author} {\bibfnamefont {P.}~\bibnamefont {Mathieu}},\ }\bibfield  {title} {\bibinfo {title} {Structure of the {{Conservation Laws}} in {{Quantum Integrable Spin Chains}} with {{Short Range Interactions}}},\ }\href {https://doi.org/10.1006/aphy.1995.1101} {\bibfield  {journal} {\bibinfo  {journal} {Annals of Physics}\ }\textbf {\bibinfo {volume} {243}},\ \bibinfo {pages} {299} (\bibinfo {year} {1995}{\natexlab{b}})}\BibitemShut {NoStop}%
\bibitem [{\citenamefont {Links}\ \emph {et~al.}(2001)\citenamefont {Links}, \citenamefont {Zhou}, \citenamefont {McKenzie},\ and\ \citenamefont {Gould}}]{linksLadderOperatorOnedimensional2001}%
  \BibitemOpen
  \bibfield  {author} {\bibinfo {author} {\bibfnamefont {J.}~\bibnamefont {Links}}, \bibinfo {author} {\bibfnamefont {H.-Q.}\ \bibnamefont {Zhou}}, \bibinfo {author} {\bibfnamefont {R.~H.}\ \bibnamefont {McKenzie}},\ and\ \bibinfo {author} {\bibfnamefont {M.~D.}\ \bibnamefont {Gould}},\ }\bibfield  {title} {\bibinfo {title} {Ladder operator for the one-dimensional {{Hubbard}} model},\ }\href {https://doi.org/10.1103/PhysRevLett.86.5096} {\bibfield  {journal} {\bibinfo  {journal} {Phys. Rev. Lett.}\ }\textbf {\bibinfo {volume} {86}},\ \bibinfo {pages} {5096} (\bibinfo {year} {2001})},\ \Eprint {https://arxiv.org/abs/cond-mat/0011368} {arXiv:cond-mat/0011368} \BibitemShut {NoStop}%
\bibitem [{\citenamefont {Takizawa}\ and\ \citenamefont {Links}(2003)}]{takizawaLadderOperatorsIntegrable2003}%
  \BibitemOpen
  \bibfield  {author} {\bibinfo {author} {\bibfnamefont {{\relax MC}.}~\bibnamefont {Takizawa}}\ and\ \bibinfo {author} {\bibfnamefont {{\relax JR}.}~\bibnamefont {Links}},\ }\bibfield  {title} {\bibinfo {title} {Ladder {{Operators}} for {{Integrable One-Dimensional Lattice Models}}},\ }in\ \href@noop {} {\emph {\bibinfo {booktitle} {{{GROUP}} 24: {{Physical}} and {{Mathematical Aspects}} of {{Symmetries}}: {{Proceedings}} of the 24th {{International Colloquium}} on {{Group Theoretical Methods}} in {{Physics}}, {{Paris}}, 15-20 {{July}} 2002}}},\ Vol.~\bibinfo {volume} {15}\ (\bibinfo  {publisher} {CRC Press},\ \bibinfo {year} {2003})\ p.\ \bibinfo {pages} {417}\BibitemShut {NoStop}%
\bibitem [{\citenamefont {Shiraishi}\ and\ \citenamefont {Yamaguchi}(2025)}]{shiraishiDichotomyTheoremSeparating2025}%
  \BibitemOpen
  \bibfield  {author} {\bibinfo {author} {\bibfnamefont {N.}~\bibnamefont {Shiraishi}}\ and\ \bibinfo {author} {\bibfnamefont {M.}~\bibnamefont {Yamaguchi}},\ }\bibfield  {title} {\bibinfo {title} {Dichotomy theorem separating complete integrability and non-integrability of isotropic spin chains},\ }\Eprint {https://arxiv.org/abs/2504.14315} {arXiv:2504.14315 [cond-mat]}  (\bibinfo {year} {2025})\BibitemShut {NoStop}%
\end{thebibliography}%

\end{document}